\documentclass[prd,,aps,showpacs,nofootinbib,superscriptaddress,amsmath,amssymb]{revtex4}
\usepackage{amssymb,epsfig}
\usepackage{mathrsfs}
\usepackage[usenames,dvipsnames]{color}
\usepackage{multirow, array} %para las tablas
\usepackage{float} % para usar [H]
\usepackage{amsthm}

\newcommand{\Real}{\mathbb{R}}

%%%%%%%%%%%%%%%%%%%%%%%%%
%%%%% TEOREMAS %%%%%%%%%%%%%
%%%%%%%%%%%%%%%%%%%%%%%%%

\newtheorem{Teo}{Theorem}

\newtheorem{Lem}{Lemma}

%%%%%%%%%%%%%%%%%%%%%%%%%
%%%   BEGIN DOCUMENT  %%%
%%%%%%%%%%%%%%%%%%%%%%%%%

\begin{document}

%%%%%%%%%%%%%%%%%%%%%%%%%%%%%
%%%   TITLE AND AUTHORS   %%%
%%%%%%%%%%%%%%%%%%%%%%%%%%%%%

\title{Static spherical perfect fluid stars with finite radius in general relativity: a review}

\author{Emmanuel Ch\'avez Nambo}
\affiliation{Instituto de F\'isica y Matem\'aticas, Universidad Michoacana de San Nicol\'as de Hidalgo,
Edificio C-3, Ciudad Universitaria, 58040 Morelia, Michoac\'an, M\'exico}

\author{Olivier Sarbach}
\affiliation{Instituto de F\'isica y Matem\'aticas, Universidad Michoacana de San Nicol\'as de Hidalgo,
Edificio C-3, Ciudad Universitaria, 58040 Morelia, Michoac\'an, M\'exico}

\date{\today}

%%%%%%%%%%%%%%%%%%%%
%%%   ABSTRACT   %%%
%%%%%%%%%%%%%%%%%%%%

\begin{abstract}
In this article, we provide a pedagogical review of the Tolman-Oppenheimer-Volkoff (TOV) equation and its solutions which describe static, spherically symmetric perfect fluid stars in general relativity. Our discussion starts with a systematic derivation of the TOV equation from the Einstein field equations and the relativistic Euler equations. Next, we give a proof for the existence and uniqueness of solutions of the TOV equation describing a star of finite radius, assuming suitable conditions on the equation of state characterizing the matter. We also prove that the compactness of the matter contained inside a sphere centered at the origin satisfies the well-known Buchdahl bound, independent of the radius of the sphere. Further, we derive the equation of state for an ideal, classical monoatomic relativistic gas from statistical mechanics considerations and show that it satisfies our assumptions for the existence of a unique solution describing a finite radius star. Although none of the results discussed in this article are new, they are usually scattered in different articles and books in the literature; hence it is our hope that this article will provide a self-contained and useful introduction to the topic of relativistic stellar models.
\end{abstract}

\pacs{04.20.a-q; 04.25.Dm; 95.30.Sf; 98.80.Jk}
%04.20.−q, %Classical general relativity
%04.25.Dm, % numerical relativity
%95.30.Sf, % relativity and gravitation
%98.80.Jk  % Mathematical and relativistic aspects of cosmology 
%}

%%%%%%%%%%%%%%%%%%%%%%
%%%   MAKE TITLE   %%%
%%%%%%%%%%%%%%%%%%%%%%

\maketitle

%%%%%%%%%%%%%%%%%%%%%%%%%%%%%%%%%%%%%%%%%
\section{Introduction}
\label{Sec:Intro}
%%%%%%%%%%%%%%%%%%%%%%%%%%%%%%%%%%%%%%%%%

The simplest model for describing a spherical star in equilibrium is the well-known Lane-Emden equation (see Ref.~\cite{chandrasekhar1957introduction} and references therein)
\begin{equation}
\frac{1}{x^2}\frac{d}{dx}\left( x^2 \frac{d\Theta}{dx}\right)  + \Theta^N = 0,
\label{Eq:LaneEmden}
\end{equation}
where $x$ represents a dimensionless radius, $\Theta^N$ is proportional to the mass density $\rho$ and $N$ is the polytropic index characterizing the equation of state of the matter. This model is based on the assumption of a static and spherically symmetric Newtonian perfect fluid with a polytropic equation of state in which the pressure $p$ is related to the density through the relation $p(\rho) = K\rho^{\gamma}$, $K$ being a constant and $\gamma = 1 + \frac{1}{N}$ the adiabatic index. Under these assumptions, Eq.~(\ref{Eq:LaneEmden}) easily follows from the condition of hydrostatic equilibrium and the Poisson equation for the gravitational potential, and it yields a simple and successful model that is able to describe (in first approximation) most of the stars in the Universe and even other astrophysical objects like planets. For example, our Sun can be described in first approximation by the Lane-Emden equation~(\ref{Eq:LaneEmden}) with polytropic index $N = 3$ ($\gamma = 4/3$), while low-mass white-dwarfs stars can be described by Eq.~(\ref{Eq:LaneEmden}) with index $N = 3/2$ ($\gamma = 5/3$). Giant planets, like Jupiter and Saturn, can be approximated by $N = 1$ ($\gamma = 2$) while the solution with $N = 0$ ($\gamma = \infty$) corresponds to a constant density, incompressible sphere and therefore serves as a simple model for rocky planets~\cite{chandrasekhar1957introduction,kTrB-Book}.

Although the Lane-Emden equation~(\ref{Eq:LaneEmden}) provides a simple model for most stars in our Universe, a more realistic description clearly requires additional physical ingredients, such as incorporating the effects of the rotation of the star, the presence of magnetic fields, radiation processes etc. Furthermore, if the star is very compact, then general relativistic effects become important. For a star of radius $R$ and mass $M$, the compactness is measured by the ratio $r_s/R$ where $r_s := 2G_N M/c^2$ is the Schwarzschild radius of the star (with $G_N$ and $c$ denoting, respectively, Newton's constant and the speed of light). More generally, the compactness ratio at radius $r$ is defined as $2m(r)/r$ with $m(r) := G_N M(r)/c^2$, where $M(r)$ denotes the mass contained in the sphere of radius $r$ centered at the origin. Relativistic corrections must be taken into account whenever this ratio ceases to be much smaller than one. This is the case for neutron stars or more exotic stars, like quark stars (see Refs.~\cite{Shapiro,Glendenning:1997wn} for textbooks treating these subjects).

In this article, we discuss the general relativistic generalization of the Lane-Emden equation, which is known as the Tolman-Oppenheimer-Volkoff (TOV) equation~\cite{rT39,jOgV39} and serves as a model for describing such compact stars, assuming they can still be modeled by a static and spherically symmetric perfect fluid. The TOV equation is obtained by replacing the Newtonian Euler-Poisson system by its relativistic generalization, the Euler-Einstein system of equations in which the self-gravity of the matter is described according to Einstein's theory of general relativity. This leads to generalizations of the hydrostatic equilibrium condition and Poisson's equations which correctly take into account the effects from general relativity and enhance the magnitude of the pressure gradient.

While a detailed mathematical analysis of the Lane-Emden equation~(\ref{Eq:LaneEmden}) has been known for a long time (see again~\cite{chandrasekhar1957introduction} and references therein and also~\cite{uS00} for the case of more general equations of state), a rigorous analysis of its relativistic counterpart has been completed only in more recent years. Pioneering work in this direction has started with the work by Rendall and Schmidt~\cite{Rendall_1991}, where it is shown that under certain assumptions on the equation of state, there exists for each value of the central density a unique global solution of the TOV equation in which the corresponding star either has a finite radius (and the solution being the Schwarzschild solution in the exterior region) or has infinite radius with the energy density converging to zero as $r\to \infty$.\footnote{Stars with infinite radius are relevant as well, as long as their density decays sufficiently fast to zero when $r\to \infty$ such that their total mass is finite. In particular, this is the case for boson stars, where the perfect fluid source of matter is replaced by a massive scalar field, see Ref.~\cite{sLcP17} for a recent review. For a recent study regarding the asymptotic behavior of some perfect fluid star models with infinite extend, see Ref.~\cite{lAaB19}.} Some necessary and sufficient conditions on the equation of state yielding a star with finite radius are also given in~\cite{Rendall_1991}. A different proof for the existence of solutions describing a star with finite radius was given by Makino~\cite{makino1998}, under assumptions on the equation of state which are similar to the one formulated in the next section of the present article, with the effective adiabatic index $\gamma$ being restricted to the range $4/3 < \gamma < 2$ for sufficiently small values of the density. The work in~\cite{makino1998} also discusses the radial linearized perturbations of the static solutions, showing that they lead to a self-adjoint operator with a purely discrete spectrum. For further work providing conditions on the equation of state which yield a spherical star of finite (or infinite) extend see Refs.~\cite{wS02,jH02}. In particular, the work by Simon~\cite{wS02} discusses the relation of these conditions with the uniqueness property of the static spherical stars among all possible static, asymptotically flat solutions of the Euler-Einstein equations. Other conditions that guarantee the finiteness of the star's radius have been presented by Ramming and Rein~\cite{Ramming_2013}. These conditions cover perfect fluid stars as well as self-gravitating collisionless gas configurations in both the Newtonian and relativistic regimes. For a general study of the relativistic spherically symmetric static perfect fluid models based on the theory of dynamical systems, see Ref.~\cite{Heinzle_2003}.

Coming back to the compactness ratio of the star (which determines when the relativistic effects are important), Buchdahl showed~\cite{Buchdahl:1959zz} that if the pressure is isotropic and the energy density does not increase outwards, then any static, spherically symmetric relativistic star must satisfy the inequality $r_s/R < 8/9$. This inequalities was later generalized by Andr\'easson~\cite{Andreasson:2007ck} who provides an $r$-independent bound on the compactness ratio $2m(r)/r$ under purely algebraic inequalities on the energy density and pressure, and hence removes the monotonicity assumption on the density profile.

The goal of this article is to provide a self-contained pedagogical review of the most important aspects of the TOV equation and its solutions. We start in section~\ref{Sec:TOV} with a systematic deduction of the TOV equation from the Euler-Einstein system of equations with a static, spherical ansatz, and we specify our assumptions on the equation of state. Moreover, in order to facilitate the mathematical analysis that follows, we rewrite the TOV equation in terms of dimensionless quantities. Next, in section~\ref{Sec:LocalExistence} we use the contraction mapping principle in order to prove the existence of a unique local solution for the dimensionless TOV equation near the center of symmetry $r = 0$. It should be noted that this step does not follow in a straightforward way from the standard results of the theory of ordinary differential equations, since the TOV equation is nonlinear and singular at $r = 0$. Next, in section~\ref{Sec:GlobalExistence} we prove that under the assumptions on the equation of state given in section~\ref{Sec:TOV} the local solution can be extended to either infinite radius or to a finite radius, and partly following~\cite{Ramming_2013} we prove that as long as the effective adiabatic index $\gamma$ is strictly larger than $4/3$ for small densities, the radius must be finite. Our proof also shows that the Buchdahl inequality $2m(r)/r < 8/9$ must hold for all values of the radius $r > 0$. A numerical example is analyzed in section~\ref{Sec:Numerical} and a summary and conclusions are presented in section~\ref{Sec:Conclusions}. This article also contains several appendices which provide technical details and some important examples. In appendix~{\ref{App:Curvature} we give details on the computation of the Riemann, Einstein and Ricci tensors which are used to derive the TOV equation. In appendix~\ref{App:StatFis} we provide a derivation of the equation of state describing a relativistic, ideal classical monoatomic gas from purely statistical physics considerations and mention the corresponding results for a completely degenerate ideal Fermi gas. In appendix~\ref{App:ModifiedBessel} we discuss some important properties of the modified Bessel functions of the second kind which are needed in appendix~\ref{App:StatFis}. In the final appendix~\ref{App:Completeness} we prove the completeness of the function space $X_R$ which plays a fundamental role for the local existence proof in section~\ref{Sec:LocalExistence}.

In most of the article, we work in geometrized units, for which $G_N = c = 1$.

%%%%%%%%%%%%%%%%%%%%%%%%%%%%%%%%%%%%%%%%%%%%
\section{Derivation of the TOV equation and assumptions on the equation of state}
\label{Sec:TOV}
%%%%%%%%%%%%%%%%%%%%%%%%%%%%%%%%%%%%%%%%%%%%

In this section, we start with a review of the derivation of the TOV equation. Then, we state the precise assumptions on the equation of state on which the results in the subsequent sections are based on.

\subsection{Field equations and static, spherically symmetric ansatz}

The field equations describing a relativistic, self-gravitating perfect fluid configuration are given by the coupled system consisting of the $10$ independent components of Einstein's field equations,
\begin{equation}
\label{Eq:Einstein}
G_{\mu\nu} = \frac{8\pi G_N}{c^4}T_{\mu\nu},
\end{equation}
together with the $4$ relativistic Euler equations
\begin{equation}
\label{Eq:Euler}
\nabla^\mu T_{\mu\nu} = 0.
\end{equation}
Here and in the following, Greek indices $\mu,\nu,\ldots$ denote spacetime indices which run over $0,1,2,3$, $G_{\mu\nu}$ are the components of the Einstein tensor associated with the spacetime metric $g_{\mu\nu}$ (which is symmetric, i.e. $G_{\mu\nu} = G_{\nu\mu}$ and hence has $10$ independent components like the metric components $g_{\mu\nu}$), and $T_{\mu\nu} = T_{\nu\mu}$ are the components of the energy-momentum-stress tensor which describes the sources of energy and matter. For the perfect fluid case considered here,
\begin{equation}
\label{Eq:EMST}
T_{\mu\nu} =  \frac{\varepsilon + p}{c^2} u_\mu u_\nu + p g_{\mu\nu},
\end{equation}
where $\varepsilon$, $p$ and $u^\mu = g^{\mu\nu} u_\nu$ refer, respectively, to the energy density, pressure and the components of the four-velocity of the fluid, normalized such that $u_\mu u^\mu = -c^2$. In terms of an orthonormal frame ${\bf e}_{\hat{0}},{\bf e}_{\hat{1}},{\bf e}_{\hat{2}},{\bf e}_{\hat{3}}$ of vector fields such that ${\bf e}_{\hat{0}} = c^{-1} u^\mu\partial_\mu$, the components of the energy-momentum-stress tensor are
\begin{equation}
\label{Eq:ComT}
(T_{\hat{\alpha}\hat{\beta}}) = \mbox{diag}(\varepsilon,p,p,p),
\end{equation}
and thus $\varepsilon$ and $p$ represent the energy density and pressure measured by an observer which is co-moving with the fluid (i.e. an observer whose world line is tangent to the four-velocity).

The Einstein tensor $G_{\mu\nu}$ is obtained from the Riemann curvature tensor $R^\alpha{}_{\beta\mu\nu}$ as follows:
\begin{equation}
G_{\mu\nu} = R_{\mu\nu} - \frac{R}{2}g_{\mu\nu},
\end{equation}
where $R_{\mu\nu} = R^{\alpha}{}_{\mu\alpha\nu}$ are the components of the Ricci tensor and its trace $R = g^{\mu\nu}R_{\mu\nu}$ is the Ricci scalar. The components of the Riemann curvature tensor, in turn, are given by
\begin{equation}
\label{Eq:Riemann}
R^\mu{}_{\nu\alpha\beta} = \partial_\alpha\Gamma^\mu{}_{\beta\nu} + \Gamma^\sigma{}_{\beta\nu}\Gamma^\mu{}_{\alpha\sigma} - (\alpha \leftrightarrow \beta) 
 = -R^\mu{}_{\nu\beta\alpha},
\end{equation}
where $\Gamma^{\nu}{}_{\alpha\beta}$ denote the Christoffel symbols, which are determined by the components of the metric tensor and their first derivatives,
\begin{equation}
\label{Eq:Christoffel}
\Gamma^\nu{}_{\alpha\beta} = \frac{1}{2}g^{\nu\sigma}\left(\frac{\partial g_{\beta\sigma}}{\partial x^{\alpha}} + \frac{\partial g_{\alpha\sigma}}{\partial x^{\beta}}  - \frac{\partial g_{\alpha\beta}}{\partial x^{\sigma}}\right).
\end{equation}
Due to the contracted Bianchi identities, $\nabla^\mu G_{\mu\nu} = 0$, Eq.~(\ref{Eq:Euler}) is a consequence of Einstein's field equations~(\ref{Eq:Einstein}), so in principle it is sufficient to solve Eq.~(\ref{Eq:Einstein}). However, as we will see, it is simpler to solve instead the relativistic Euler equations~(\ref{Eq:Euler}) together with part of the components of the Einstein equations.

For the remainder of this article, we focus on spherically symmetric and static configurations, in which the metric has the form
\begin{equation}
\label{Eq:MetricAnsatz}
ds^2 = g_{\mu\nu} dx^\mu dx^\nu
 = -e^{\frac{2\Phi(r)}{c^2}}c^2dt^2 + e^{2\Psi(r)}dr^2 
 + r^2(d\vartheta^2 +  \sin^2\vartheta d\varphi^2),
\end{equation}
where $(x^\mu) = (t, r, \vartheta, \varphi)$ are spherical coordinates and $\Phi$ and $\Psi$ are functions of the radius coordinate $r$ only which will be determined by the field equations~(\ref{Eq:Einstein},\ref{Eq:Euler}). Note that when $\Phi = \Psi = 0$, the metric~(\ref{Eq:MetricAnsatz}) reduces to the Minkowski metric in spherical coordinates. In the solutions discussed below, the coordinate $r$ runs from $0$ to $\infty$. For the solution to be regular at $r=0$ we require $\Phi(r)$ and $\Psi(r)$ to be smooth, even functions of $r$ (i.e. all their derivatives of odd order vanish at $r=0$). As $r\to \infty$ we require asymptotic flatness, that is $\Phi,\Psi \to 0$. The perfect fluid configuration is also assumed to be static and spherically symmetric. This means that $\varepsilon = \varepsilon(r)$ and $p = p(r)$ are functions of $r$ only, and that the four-velocity is of the form
\begin{equation}
u^\mu\frac{\partial}{\partial x^\mu} = e^{-\frac{\Phi}{c^2}}\frac{\partial}{\partial t},
\end{equation}
such that the fluid elements are at rest in the reference frame defined by the coordinate system $(t,r,\vartheta,\varphi)$.

\subsection{Explicit expressions for the Einstein tensor and exterior solution}

In order to compute the $10$ independent components of the Einstein tensor $G_{\mu\nu}$ appearing in Eq.~(\ref{Eq:Einstein}), one needs to calculate first the $40$ independent Christoffel symbols $\Gamma^\nu{}_{\alpha\beta}$, as explained in the previous subsection. To carry out this calculation, it is convenient to exploit the block-diagonal form of the metric and write it as follows:
\begin{equation}
(g_{\mu\nu}) = \begin{pmatrix} \tilde{g}_{ab} & 0 \\ 0 & r^2\hat{g}_{AB} \end{pmatrix}, \qquad (g^{\mu\nu}) = \begin{pmatrix} \tilde{g}^{ab} & 0 \\ 0 & r^{-2}\hat{g}^{AB} \end{pmatrix},
\label{Eq:SphMetric}
\end{equation}
where $a,b$ refer to the coordinates $t,r$ and $A,B$ to the coordinates $\vartheta,\varphi$. For the specific parametrization~(\ref{Eq:MetricAnsatz}) relevant to this section,  the two blocks are given by
\begin{align}
\tilde{g}_{ab}dx^adx^b &= -e^{2\Phi(r)}dt^2 + e^{2\Psi(r)}dr^2, 
& (\hbox{$a, b = t, r$}), \label{Eq:TwoMetric}\\
\hat{g}_{AB}dx^Adx^B &= d\vartheta^2 +  \sin^2\vartheta d\varphi^2, 
& (\hbox{$A, B = \vartheta, \varphi$}). \label{Eq:SphTwoMetric}
\end{align}
From now on, we work in geometrized units in which $G_N = c = 1$, implying in particular that time and mass have units of length. The details of the calculations are presented in Appendix~\ref{App:Curvature}; here we directly present the resulting expressions for the Christoffel symbols and the components of the Einstein tensor. The non-vanishing Christoffel symbols are:
\begin{eqnarray}
&& \Gamma^{t}{}_{tr} = \Gamma^{t}{}_{rt} = \Phi', \qquad
\Gamma^{r}{}_{rr} = \Psi', \qquad
\Gamma^{r}{}_{tt}  = \Phi' e^{2(\Phi - \Psi)},
\label{Eq:Christoffel1}\\
&& \Gamma^{\vartheta}{}_{r\vartheta} =  \Gamma^{\vartheta}{}_{\vartheta r}
 = \Gamma^{\varphi}{}_{r\varphi} = \Gamma^{\varphi}{}_{\varphi r} = \frac{1}{r},
\label{Eq:Christoffel2}\\
&& \Gamma^{r}{}_{\vartheta\vartheta} = -re^{-2\Psi}, \qquad
\Gamma^{r}{}_{\varphi\varphi} = -r\sin^2\vartheta e^{-2\Psi},
\label{Eq:Christoffel3}\\
&& \Gamma^{\vartheta}{}_{\varphi\varphi}  = -\sin\vartheta\cos\vartheta, \qquad
\Gamma^{\varphi}{}_{\varphi\vartheta} = \Gamma^{\varphi}{}_{\vartheta\varphi} = \cot\vartheta,
\label{Eq:Christoffel4}
\end{eqnarray}
which give rise to the following expressions for the Einstein tensor:
\begin{align}
\label{Eq:TE1}
G^{t}{}_{t} & = \frac{1}{r^2}\left(e^{-2\Psi} - 1\right) - \frac{2\Psi'}{r}e^{-2\Psi}, \\
\label{Eq:TE2}
G^{r}{}_{r} & = \frac{1}{r^2}\left(e^{-2\Psi} - 1\right) + \frac{2\Phi'}{r}e^{-2\Psi}, \\
\label{Eq:TE3}
G^{\vartheta}{}_{\vartheta} = G^{\varphi}{}_{\varphi} & = \left[\Phi'' + \Phi'(\Phi' - \Psi') + \frac{\Phi' - \Psi'}{r}\right]e^{-2\Psi},
\end{align}
the off-diagonal components being zero.

Based on these expressions, it is a simple task to derive the Schwarzschild metric, which describes the unique static, spherically symmetric family of solutions in the exterior vacuum region. In vacuum, there are no energy sources and thus $T_{\mu\nu} = 0$ and Einstein's field equations imply
\begin{align}
\label{Eq:S1}
\frac{1}{r^2}\left(e^{-2\Psi} - 1\right) - \frac{2\Psi'}{r}e^{-2\Psi} & = 0, \\
\label{Eq:S2}
\frac{1}{r^2}\left(e^{-2\Psi} - 1\right) + \frac{2\Phi'}{r}e^{-2\Psi} & = 0, \\
\label{Eq:TS3}
\left[\Phi'' + \Phi'(\Phi' - \Psi') + \frac{\Phi' - \Psi'}{r}\right]e^{-2\Psi} & = 0.
\end{align}
The first equation only involves $\Psi(r)$ and can be rewritten as
\begin{equation}
G^t{}_t = -\frac{1}{r^2}\frac{d}{dr}[r(1 - e^{-2\Psi})] = 0,
\end{equation}
and hence $r(1 - e^{-2\Psi}) = 2M$ for some integration constant $M$. For reasons which will become clear shortly, we assume $M > 0$ to be positive. Therefore,
\begin{equation}
\label{Eq:aSch}
e^{-2\Psi} = 1 - \frac{2M}{r}.
\end{equation}
Moreover, subtracting Eq.~(\ref{Eq:S1}) from (\ref{Eq:S2}) one obtains the relation
\begin{equation}
\Phi' = -\Psi',
\end{equation}
which can be integrated to yield
\begin{equation}
\Phi = -\Psi,
\end{equation}
where without loss of generality we have set the integration constant to zero, since otherwise it could be absorbed into a redefinition of the time coordinate $t$ (which does not alter the physics of the problem because of the general covariance principle of General Relativity). Using this relation in Eq.~(\ref{Eq:aSch}) one obtains 
\begin{equation}
\label{Eq:PhiSch}
e^{2\Phi} = 1 - \frac{2M}{r},
\end{equation}
which yields the Schwarzschild solution, given by the line element
\begin{equation}
ds^2 = -\left(1 - \frac{2M}{r}\right)dt^{2} + \left(1 - \frac{2M}{r}\right)^{-1}dr^{2} + r^2(d\vartheta^{2} + \sin^{2}\vartheta d\varphi^{2}). 
\end{equation}
We see that for $r \gg M$, $2M/r \ll 1$, and in this limit the metric can be considered to describe a small perturbation of the flat Minkowski metric. Thus, in this case the Newtonian limit is valid which allows one to identify the quantity $-M/r$ with the Newtonian potential $\Phi$, that is, $\Phi = -M/r$. In this sense, the integration constant $M$ can be identified with the total mass of the central object. The Schwarzschild metric is an exact non-trivial (i.e. non-flat) solution of the Einstein field equations. In the absence of matter, it describes a non-rotating black hole (see, for instance, Ref.~\cite{Wald} for details).

\subsection{Interior region and TOV equations}

In the interior region, the relevant field equations are obtained by replacing the right-hand sides of Eqs. (\ref{Eq:S1})-(\ref{Eq:TS3}) with the corresponding components of $8\pi$ times the energy-momentum-stress tensor.\footnote{Recall that we work in geometrized units in which $G_N = c = 1$.} Using the fact that $T^t{}_t = -\varepsilon$, $T^r{}_r = T^\vartheta{}_\vartheta = T^\varphi{}_\varphi = p$, we obtain the following three equations
\begin{align}
\label{Eq:Einstein1}
\frac{1}{r^2}\left(e^{-2\Psi} - 1\right) - \frac{2\Psi'}{r}e^{-2\Psi} & = -8\pi\varepsilon, \\
\label{Eq:Einstein2}
\frac{1}{r^2}\left(e^{-2\Psi} - 1\right) + \frac{2\Phi'}{r}e^{-2\Psi} & = 8\pi p, \\
\label{Eq:Einstein3}
\left[\Phi'' + \Phi'(\Phi' - \Psi') + \frac{\Phi' - \Psi'}{r}\right]e^{-2\Psi} & = 8\pi p.
\end{align}
As in the vacuum case, the left-hand side of Eq.~(\ref{Eq:Einstein1}) only involves the metric field $\Psi(r)$, and it can be rewritten in the form
\begin{equation}
\frac{1}{r^2}\frac{d}{dr}[r(1 - e^{-2\Psi})] = 8\pi\varepsilon.
\end{equation}
Integrating both sides of this equation yields
\begin{equation}
\label{Eq:aTOV}
e^{-2\Psi(r)} = 1 - \frac{8\pi}{r} \int_{0}^{r} \varepsilon(s) s^2 ds,
\end{equation}
where we have used the fact that $\Psi(r)$ is regular at $r = 0$ to fix the integration constant. Introducing the mass function
\begin{equation}
\label{Eq:MasaT}
m(r) := 4\pi\int_{0}^{r} \varepsilon(s) s^2 ds,
\end{equation}
which measures the mass-energy contained in a sphere of radius $r$, Eq.~(\ref{Eq:aTOV}) can be rewritten as
\begin{equation}
\label{Eq:masa}
e^{-2\Psi(r)} = 1 - \frac{2m(r)}{r}.
\end{equation}
Eliminating the factor $e^{-2\Psi(r)}$ from Eq. (\ref{Eq:Einstein2}) one obtains 
\begin{equation}
\label{Eq:Phi}
\Phi'(r) = \frac{m(r) + 4\pi r^3 p(r)}{r[r - 2m(r)]}.
\end{equation}
This is the relativistic generalization of the Newtonian equation $\Phi'(r) = m(r)/r^2$, to which Eq.~(\ref{Eq:Phi}) reduces to in the limit $p \ll \varepsilon$ and $m(r) \ll r$.

Next, one needs an equation for the pressure $p(r)$. Such an equation could be obtained by substituting Eqs.~(\ref{Eq:aTOV}) and (\ref{Eq:Phi}) into the last Einstein equation~(\ref{Eq:Einstein3}). However, a lot of algebraic work can be saved by considering instead Eq.~(\ref{Eq:Euler}), from which one directly obtains the same result, which is
\begin{equation}
p' = -(p + \varepsilon)\Phi'.
\end{equation} 
Finally, we may eliminate $\Phi'$ from this equation by using Eq.~(\ref{Eq:Phi}), obtaining the well-known Tolman-Oppenheimer-Volkoff (TOV) equation
\begin{equation}
\label{Eq:TOV}
p'(r) = -[p(r) + \varepsilon(r)]\frac{m(r) + 4\pi r^3 p(r)}{r[r - 2m(r)]}.
\end{equation}
This generalizes the Newtonian condition for hydrostatic equilibrium $p'(r) = -\rho(r)\frac{m(r)}{r^2}$ (with $\rho$ the mass density) to the general relativistic case. Note that the relativistic correction terms tend to increase the pressure gradient $|p'|$, yielding more compact objects. Note also that Eq.~(\ref{Eq:TOV}) is singular at $r = 0$ and $2m(r) = r$.  The first one requires appropriate regularity conditions at the center and will be dealt with by replacing the mass function $m(r)$ with the mean density (see sections~\ref{SubSec:Dimensionless} and~\ref{Sec:LocalExistence} below). Regarding the potential singularity at $2m(r) = r$, we will prove in section~\ref{Sec:GlobalExistence} that (under the hypotheses made in this article), $2m(r) < r$ everywhere, such that it does not occur. For now we note that Eq.~(\ref{Eq:MasaT}) implies that $m(r)\simeq r^3$ near the center such that $2m(r)/r\simeq r^2$.

In summary, the metric for a spherical, static, self-gravitating perfect fluid configuration is given by
\begin{equation}
ds^2 = -e^{2\Phi(r)}dt^2 + \left(1 - \frac{2m(r)}{r}\right)^{-1}dr^2 + r^2(d\vartheta^2 + \sin^2\vartheta d\varphi^2),
\end{equation}
where $m(r)$ is given by Eq. (\ref{Eq:MasaT}), $\Phi(r)$ is determined from Eq. (\ref{Eq:Phi}), and $p(r)$ must satisfy the TOV equation~(\ref{Eq:TOV}). The latter can be integrated as soon as one specifies an equation of state which provides a relation between the pressure $p$ and the energy density $\varepsilon$. In the next subsection we specify our precise assumptions on the equations of state considered in this article, while in the subsequent sections we provide a rigorous analysis for the existence of solutions of the TOV equation.

%%%%%%%%%%%%%%%%%%%%%%%%%%%%%%%%%%%%%%%%%%%%%%%

\subsection{The equation of state}
\label{sec:EquationState}

In the following, we state our assumptions on the equation of state, which provides a relation between the pressure $p$ and the energy density $\varepsilon$. Such a relation should be obtained from a statistical mechanics model of the matter, which usually provides the pressure and energy density as a function of the particle density $n$ and the temperature $T$ of the system:
\begin{equation}
p = p(n,T),\qquad
\varepsilon = \varepsilon(n,T),
\end{equation}
see Appendix~\ref{App:StatFis} for the specific example of an ideal monoatomic relativistic gas. For the following, we assume that the perfect fluid configuration is in \emph{local thermodynamic equilibrium}, that is, each fluid (or gas) cell is in thermodynamic equilibrium and thus the macroscopic quantities describing the state of this cell satisfy the laws of thermodynamics. Assuming that the cell contains a fixed number $N$ of particles, the relevant macroscopic quantities characterizing the state of the cell are its volume $V = N/n$, its entropy $S = s N/n$ (with $s$ the entropy density), its energy $U = \varepsilon N/n$, and  other quantities such as its temperature $T$. Since $N$ is fixed, the first law of thermodynamics implies that
\begin{equation}
d\left(\frac{\varepsilon}{n}\right) = T d\left( \frac{s}{n} \right) -p d\left(\frac{1}{n}\right).
\label{Eq:FirstLaw}
\end{equation}
In general, the energy density $\varepsilon$ is a function of the entropy per particle $s/n$ and $n$; however, in this article we assume the perfect fluid is \emph{isentropic}, that is, $s/n$ is constant throughout the fluid, such that the first term on the right-hand side of Eq.~(\ref{Eq:FirstLaw}) can be ignored. In this case, $\varepsilon$ depends only on $n$ and given an equation of state in the form $p = p(n)$, integration of Eq.~(\ref{Eq:FirstLaw}) yields
\begin{equation}
\varepsilon(p) = ne_0 + n\int_{0}^{n} p(\overline{n})\frac{d\overline{n}}{\overline{n}^2},\qquad
p = p(n),
\label{Eq:epsilonp}
\end{equation}
where $e_0$ denotes the rest mass energy of the particle and where from now on, we regard $\varepsilon$ as a function of $p$ instead of $n$. More precisely, we assume $p: [0,\infty)\to \Real$ is a continuously differentiable function of the particle density $n$, satisfying the following conditions:
\begin{itemize}
\item[$(i)$] $p(n) > 0$ for $n > 0$ (positive pressure)
\item[$(ii)$] $p$ is monotonously increasing
\item[$(iii)$] Introducing the effective adiabatic index
\begin{equation}
\label{Eq:gamma(n)}
\gamma(n) := \frac{\partial\log p}{\partial\log n}(n) = \frac{n}{p(n)}\frac{\partial p}{\partial n}(n),
\qquad n > 0,
\end{equation}
we assume there is a constant $\gamma_1 > 1$ such that, for all small enough $n$,
\begin{equation}
\gamma(n)\geq \gamma_1
\end{equation}
\item[$(iv)$] $e_0 > 0$ (positive rest mass energy)
\end{itemize}

The condition $(iii)$ implies that for small enough $n_2\geq n_1 > 0$,
\begin{equation}
\frac{p(n_1)}{p(n_2)} \leq \left( \frac{n_1}{n_2} \right)^{\gamma_1},
\label{Eq:pInequality}
\end{equation}
which implies that $p(n)$ converges to zero at least as fast as $n^{\gamma_1}$ for $n\to 0$. In particular, this assures that the integral in Eq.~(\ref{Eq:epsilonp}) is well-defined, and it follows from the conditions $(i)$--$(iv)$ that $\varepsilon : [0, \infty)\to \Real$ is a continuously differentiable, monotonously increasing function which satisfies $\varepsilon(p)/n \to e_0$ as $p \to 0$.

For a discussion of realistic equations of state, including those describing phase transitions, we refer the reader to Ref.~\cite{Glendenning:1997wn}. In this case, the function $\varepsilon(p)$ might be discontinuous; however, it seems that models for neutron star matter based on two conserved quantities (baryonic number and electric charge) do yield a continuous relation between $n$, $p$ and $\varepsilon$, see chapter~9 in~\cite{Glendenning:1997wn}. See also Refs.~\cite{NeutronStarStructure,NuclearEquation,DenseMatter,MassesRadii} for recent work and reviews on realistic equations of state describing dense matter in neutron stars.

%%%%%%%%%%%%%%%%%%%%%%%%%%%%%%%%%%%%%%%%%

\subsection{Dimensionless field equations and summary}
\label{SubSec:Dimensionless}

For the analysis in the following sections it is useful to introduce the averaged energy density $\overline{\rho}(r)$ contained in a sphere of radius $r$:
\begin{equation}
\label{Eq:W}
\overline{\rho}(r) := \frac{m(r)}{\frac{4\pi}{3} r^3} 
 = \frac{3}{r^3}\int_0^r \varepsilon(p(s)) s^2 ds,\qquad r > 0,
\end{equation}
which is regular at the center. In terms of $\overline{\rho}(r)$, Eqs.~(\ref{Eq:Phi},\ref{Eq:TOV}) can be rewritten as
\begin{equation}
\label{Eq:TOV2}
\Phi'(r) = -\frac{p'(r)}{p(r) + \varepsilon(p(r))} 
 = \frac{4\pi r}{3} \frac{\overline{\rho}(r) + 3p(r)}{1 - \frac{8\pi}{3} r^2 \overline{\rho}(r)}.
\end{equation}
Furthermore, it is also very convenient for the following to work in terms of dimensionless quantities.  For this reason, we write the radius, pressure, energy density and averaged energy density as follows:
\begin{equation}
r  = \ell x,\qquad
p(r) = p_c P(x), \qquad
\varepsilon(p) = \varepsilon_c e(P), \qquad
\overline{\rho}(r) = \varepsilon_c w(x), 
\label{Eq:Dimensionless}
\end{equation}
where $p_c = p(0)$ is the central pressure, $\varepsilon_c$ the central energy density, and $\ell$ is a free parameter which will be chosen later. Here, the function $e(P)$ represents the dimensionless equation of state which satisfies the same properties as the function $\varepsilon(p)$ in Eq.~(\ref{Eq:epsilonp}). By definition, the functions $P(x)$, $w(x)$ and $e(P)$ satisfy the following conditions at the center,
\begin{equation}
P(0) = w(0) = 1, \qquad
e(1) = 1.
\label{Eq:CenterConditions}
\end{equation}
In terms of these quantities, the field equations~(\ref{Eq:TOV2}) are
\begin{equation}
\frac{d}{dx}\left(\frac{\Phi}{\lambda}\right) = -\frac{1}{e + \lambda P}\frac{dP}{dx} 
 = \frac{4\pi\ell^2 \varepsilon_c x}{3\lambda} \frac{w(x) 
  + 3\lambda P(x)}{1 - \frac{8\pi\ell^2\varepsilon_c}{3}x^2 w(x)},
\label{Eq:TOV3}
\end{equation}
where we have introduced the dimensionless parameter
\begin{equation}
\lambda := \frac{p_c}{\varepsilon_c},
\end{equation} 
representing the ratio between the central pressure and energy density. Note that in the Newtonian limit $\lambda \to 0$ since in this case the energy density and pressure are dominated by the contribution from the rest mass. In this sense, the parameter $\lambda$ measures how relativistic the resulting configuration will be. We see from Eq.~(\ref{Eq:TOV3}) that it is convenient to choose the length scale parameter $\ell$ such that
\begin{equation}
\frac{4\pi\ell^2 \varepsilon_c}{3} = \lambda.
\label{Eq:lDef}
\end{equation}
Also introducing the function $\phi(x) := \Phi(r)/\lambda$, our final form of the dimensionless field equations is
\begin{equation}
\label{Eq:TolmanA}
\frac{d}{dx}\phi(x) = 
 -\frac{1}{e(P(x)) + \lambda P(x)} \frac{d}{dx} P(x) 
  = x \frac{w(x) + 3\lambda P(x)}{1 - 2\lambda x^2 w(x)},
\end{equation}
with
\begin{equation}
\label{Eq:WA}
w(x) = \frac{3}{x^{3}}\int_0^x e(P(y)) y^2 dy.
\end{equation}
Note that in the Newtonian limit $\lambda\to 0$, Eq.~(\ref{Eq:TolmanA}) reduces to
\begin{equation}
\label{Eq:TolmanNewton}
\frac{d}{dx}\phi(x) = 
 -\frac{1}{e(P(x))} \frac{d}{dx} P(x) = x w(x),
\end{equation}
which are the correct Newtonian equations.

%%%%%%%%%%%%%%%%%%%%%%%%%%%%%%%%%%%%%%%%%
\section{Local existence near the center}
\label{Sec:LocalExistence}
%%%%%%%%%%%%%%%%%%%%%%%%%%%%%%%%%%%%%%%%%

In this section we prove, for each value $p_c > 0$ of the central pressure, the existence of a unique local solution $p(r)$ of the TOV equation~(\ref{Eq:TOV}) in the vicinity of the center of symmetry $r = 0$ such that $p(0) = p_c$. In the next section, this solution will be shown to possess a unique extension to a solution $p: [0,R_*]\to \Real$ of Eq.~(\ref{Eq:TOV}) which is monotonically decreasing and satisfies $p(R_*) = 0$, and hence describes a spherical static star of finite radius $R_*$.

In order to demonstrate the existence of the local solution of the TOV equation, we rewrite Eq.~(\ref{Eq:TolmanA}) as a fixed point problem and use the contraction mapping principle. For this, we integrate both sides of
\begin{equation}
\frac{d}{dx}P(x) = -[e(P(x)) + \lambda P(x)]x \frac{w(x) + 3\lambda P(x)}{1 - 2\lambda x^2 w(x)},
\label{Eq:TolmanABis}
\end{equation}
over $x$, obtaining (taking into account the central condition $P(0) = 1$ from Eq.~(\ref{Eq:CenterConditions})) the integral equation
\begin{equation}
\label{Eq:IntTOV}
P(x) = 1 - \int_0^x \left[e(P(y)) + \lambda P(y) \right] 
\frac{w(y) + 3\lambda P(y)}{1 - 2\lambda w(y) y^2} y dy =: TP(x),
\end{equation} 
where $w(x)$ is given by (\ref{Eq:WA}). The problem now consists in finding a function $P(x)$ (in a suitable function space which will be specified below) which satisfies the fixed point equation $P = TP$. This can be achieved by means of the contraction mapping principle, which provides sufficient conditions for $T$ to posses a unique fixed point. We recall this important result which can be found in many textbooks (see, for instance~\cite{ReedSimon80}).

\begin{Teo}[contraction mapping principle]
\label{Thm:Banach}
Let $\left(X, \|\cdot\|\right)$ be a Banach space, and let $A = \overline{A} \subset X$ be a closed, non-empty subset of $X$. Let $T : A \rightarrow A$ be a mapping from $A$ to itself which constitutes a contraction, that is, there exists a constant $L$ satisfying $0 \leq L < 1$ such that
\begin{equation}
\| T(u) - T(v)\| \leq L\|u - v\|\qquad \hbox{for all $u,v\in A$}.
\end{equation}
Then, $T$ has a unique fixed point $u^*\in A$, that is, there exists a unique $u^* \in A$ such that $T(u^*) = u^*$.\footnote{The theorem says even more: the unique fixed point $u^*\in A$ can be  obtained as the limit of the sequence $(u_k)$ defined by
\begin{equation*}
u_1 := T(u), \quad u_2 := T^2(u) = T(T(u)), \quad \ldots, \qquad u_k := T^k(u),
\end{equation*}
starting from any point $u\in A$. This sequence converges exponentially fast to $u^*$ as the following error bound shows:
\begin{equation*}
\|u_k - u^*\| \leq \frac{L^k}{1 - L}\|u_1 - u\|, \qquad k = 1, 2, 3, \ldots
\end{equation*}
}
\end{Teo}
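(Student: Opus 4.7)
The plan is to construct the fixed point as the limit of the iteration sequence described in the footnote of the theorem, and then use the contraction property both to establish convergence and to obtain uniqueness. Concretely, pick any $u_0\in A$ and define $u_{k+1} := T(u_k)$ for $k\geq 0$; since $T$ maps $A$ into itself and $A$ is nonempty, this sequence is well defined and lies entirely in $A$.

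The key step is to show that $(u_k)$ is a Cauchy sequence in $(X,\|\cdot\|)$. Iterating the contraction inequality gives $\|u_{k+1}-u_k\| = \|T(u_k)-T(u_{k-1})\|\leq L\|u_k-u_{k-1}\|\leq\cdots\leq L^k\|u_1-u_0\|$. Then for any $m>k$, the triangle inequality together with a geometric-series bound yields
\begin{equation}
\|u_m - u_k\|\leq\sum_{j=k}^{m-1}\|u_{j+1}-u_j\|\leq\|u_1-u_0\|\sum_{j=k}^{m-1}L^j\leq\frac{L^k}{1-L}\|u_1-u_0\|,
\label{Eq:CauchyEstimate}
\end{equation}
which tends to zero as $k\to\infty$ because $L<1$. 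Since $X$ is complete, $(u_k)$ converges to some $u^*\in X$, and since $A$ is closed, $u^*\in A$.

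To see that $u^*$ is a fixed point, note that the contraction property with constant $L$ makes $T$ Lipschitz continuous and hence continuous; therefore $T(u^*) = T(\lim_k u_k) = \lim_k T(u_k) = \lim_k u_{k+1} = u^*$. Uniqueness follows from a one-line argument: if $u^*,v^*\in A$ both satisfy $T(u^*)=u^*$ and $T(v^*)=v^*$, then $\|u^*-v^*\| = \|T(u^*)-T(v^*)\|\leq L\|u^*-v^*\|$, which forces $(1-L)\|u^*-v^*\|\leq 0$ and hence $u^*=v^*$ since $L<1$. Finally, the error bound stated in the footnote is obtained by letting $m\to\infty$ in the estimate \eqref{Eq:CauchyEstimate} with $u_0 = u$, using the continuity of the norm.

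I do not expect any of these steps to be a real obstacle: the only mildly delicate point is ensuring that the limit $u^*$ actually lies in $A$, which is precisely why the hypothesis that $A$ be closed (and that $T$ map $A$ into itself, so that every iterate $u_k$ stays in $A$) was included in the statement. Everything else reduces to the geometric-series estimate and the completeness of $X$.
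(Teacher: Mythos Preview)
Your proof is correct and is the standard argument for the Banach fixed-point theorem. Note, however, that the paper does not actually prove this theorem: it merely states it as a well-known result and refers the reader to a textbook (Reed--Simon). So there is no ``paper's own proof'' to compare against; what you have written is essentially the proof one would find in that reference.
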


In order to apply this theorem to the fixed point problem~(\ref{Eq:IntTOV}) we introduce, for each $R > 0$, the space $X_R := C_b( (0,R],\Real)$ of bounded, continuous real-valued functions on the interval $(0,R]$, equipped with the infinity norm:
\begin{equation}
\|P\|_{\infty} := \sup_{0 < x \leq R} |P(x)|, \quad P \in X_R.
\label{Eq:Norm}
\end{equation}
In Appendix~\ref{App:Completeness} we show that $\|\cdot\|_\infty$ defines a norm on $X_R$ and that $(X_R,\|\cdot\|_\infty)$ defines a Banach space, that is, a complete normed vector space. Next, we introduce the subset $A_R \subset X_R$ defined as
\begin{equation}
A_R := \left\{P \in X_R \; \biggr\rvert \; \lim\limits_{x\to 0} P(x) = 1 \; \hbox{and} \; \frac{1}{2} \leq P(x) \leq 1\; \hbox{for all} \; 0 < x \leq R\right\}.
%P(x)\xrightarrow[x\rightarrow{}0]\,{1}
\end{equation}
Clearly, $A_R$ is not empty since it contains the constant function $P = 1$. Furthermore, it is not difficult to verify that $A_R$ is closed: if $P_k$ is a sequence in $A_R$ which converges to $P\in X_R$ in the infinity norm, that is,
\begin{equation}
\| P_k - P \|_\infty = \sup_{0 < x \leq R} |P_k(x) - P(x)| \to 0,\qquad k\to \infty,
\end{equation}
then $P_k$ converges uniformly to $P$ and it follows that $P(x)\to 1$ as $x\to 0$ and $\frac{1}{2}\leq P(x)\leq 1$ since $P_k\in A_R$. Therefore, the limiting point $P$ of the sequence $P_k$ also lies in $A_R$, and it follows that $A_R$ is closed.

For the following, we show that the map $T$ defined in Eq.~(\ref{Eq:IntTOV}) is well-defined on $A_R$,  maps $A_R$ into itself and defines a contraction provided that $R > 0$ is small enough. For this, first note that due to the fact that $e(P)$ is an increasing function and that $P\leq 1$ it follows from Eq.~(\ref{Eq:WA}) and the normalization $e(1) = 1$ that
\begin{equation}
w(x) = \frac{3}{x^3}\int_0^x e(P(y)) y^2 dy \leq \frac{3}{x^3}\int_0^x e(1) y^2 dy = 1,
\end{equation}
for all $P\in A_R$, such that $w(x)$ is bounded from above by $1$. Also, since $P \geq 1/2$ for all $P\in A_R$, it follows that
\begin{equation}
w(x) = \frac{3}{x^3}\int_0^x e(P(y)) y^2 dy \geq \frac{3}{x^3}\int_0^x e(1/2) y^2 dy 
 = e(1/2) =: w_0 > 0,
\end{equation}
which allows us to conclude that $w_0 \leq w \leq 1$ for all $P \in A_R$. Moreover, since $e$ and $P$ are continuous, it follows that $w$ is continuous and (using L'H\^opital's rule) that $w(x)\to e(1) = 1$ as $x\to 0$. Thus, if the function $P$ lies in the set $A_R$, then the function $w$ defined by Eq.~(\ref{Eq:WA}) belongs to the set
\begin{equation}
B_R := \left\{w \in X_R \; \biggr\rvert \; \lim\limits_{x\to 0} w(x) = 1 \; \hbox{and} \; w_0 \leq w(x) \leq 1\; \hbox{for all} \; 0 < x \leq R\right\}.
\end{equation}

After these preliminary remarks, we are ready to show that the map $T$ in Eq.~(\ref{Eq:IntTOV}) defines a contraction on $A_R$, provided $R > 0$ is small enough: first, we observe that $1 - 2\lambda w(y) y^2 \geq 1 - 2\lambda R^2$ for all $0 < y \leq R$ if $w\in B_R$, such that the denominator in the integrand of Eq.~(\ref{Eq:IntTOV}) cannot vanish if $0 < x \leq R$ and $R$ is chosen small enough, such that $2\lambda R^2 < 1$. Next, using again the continuity and boundedness of the functions $e$, $P$ and $w$, it follows that $TP: (0,R]\to \Real$ is continuous and satisfies $TP(x)\to 1$ for $x\to 0$. Moreover, because the integrand in Eq.~(\ref{Eq:IntTOV}) is positive, it follows that $TP$ is monotonously decreasing. To show that $TP\in A_R$ it thus remains to prove that $TP(R) \geq \frac{1}{2}$. For this, we use the estimates $P\leq 1$, $w\leq 1$, $1 - 2\lambda w(y) y^2 \geq 1 - 2\lambda y^2$ and the fact that $e$ is an increasing function in order to estimate
\begin{equation*}
 [e(P(y)) + \lambda P(y)] \frac{w(y) + 3\lambda P(y)}{1 - 2\lambda w(y) y^2} 
\leq (1 + \lambda)\frac{1 + 3\lambda}{1 - 2\lambda y^2},
\end{equation*}
which implies
\begin{eqnarray*}
TP(x) &=&
1 - \int_0^x [e(P(y)) + \lambda P(y)] \frac{w(y) + 3\lambda P(y)}{1 - 2\lambda w(y) y^2} y dy \\
&\geq& 1 - \int_0^x (1 + \lambda)\frac{1 + 3\lambda}{1 - 2\lambda y^2} y dy \\
&=& 1 + \left(1 + \lambda \right)\frac{1 + 3\lambda}{4\lambda}\log(1 - 2\lambda x^2),
\end{eqnarray*}
for all $0 < x \leq R$, and the required condition $T P(R) \geq \frac{1}{2}$ is satisfied if $R > 0$ is small enough, such that
\begin{equation}
\label{Eq:De}
2\lambda R^2 \leq 1 - e^{-\frac{2\lambda}{(1+\lambda)(1 + 3\lambda)}},
\end{equation}
which is slightly stronger than the previous requirement $2\lambda R^2 < 1$. Therefore, if $R$ satisfies the inequality~(\ref{Eq:De}), the map $T$ defined by Eq.~(\ref{Eq:IntTOV}) is a well-defined map from $A_R$ into itself. To apply the contraction mapping principle, it remains to prove that $T$ defines a contraction on $A_R$ (for sufficiently small $R > 0$), that is, there must exist a constant $0 \leq L < 1$ such that
\begin{equation}
\|TP_2 - TP_1\|_{\infty} \leq L\|P_2 - P_1\|_{\infty}, \quad \hbox{for all $P_1, P_2 \in A_R$}.
\end{equation}
In order to verify this condition, we write the difference $TP_2 - TP_1$ in the following form:
\begin{equation}
TP_2(x) - TP_1(x) = -\int_0^x
\left[ F_{\lambda}(P_2(y), w_2(y), y) - F_{\lambda}(P_1(y), w_1(y), y) \right] y dy,
\label{Eq:TPDiff}
\end{equation}
with $F_\lambda : \left[\frac{1}{2}, 1\right] \times [w_0, 1] \times [0, R]\to \Real$ the continuously differentiable function defined by
\begin{equation}
F_\lambda(p, w, y) := [e(p) + \lambda p]\frac{w + 3\lambda p}{1 - 2\lambda w y^2},\qquad
\frac{1}{2}\leq p\leq 1,\quad w_0\leq w\leq 1,\quad 0\leq y\leq R.
\end{equation}
According to the mean value theorem \cite{Apostol}, one has for all $\frac{1}{2}\leq P_1,P_2\leq 1$, $w_0\leq w_1,w_2\leq 1$ and $0\leq y\leq R$,
\begin{equation}
F_{\lambda}(P_2, w_2, y) - F_{\lambda}(P_1, w_1, y) 
 = \frac{\partial F_{\lambda}}{\partial P}(P_*, w_*, y)(P_2 - P_1)
 + \frac{\partial F_{\lambda}}{\partial w}(P_*, w_*, y)(w_2 - w_1),
\end{equation}
with $P_* = P_1 + \theta_P(P_2 - P_1)$, $0 < \theta_P < 1$, lying between $P_1$ and $P_2$ and likewise, $w_* = w_1 + \theta_w(w_2 - w_1)$, $0 < \theta_w < 1$. Using this into Eq.~(\ref{Eq:TPDiff}) one obtains the estimate
\begin{eqnarray}
|TP_2(x) - TP_1(x)| &\leq& \int_0^x 
\left[ \left|\frac{\partial F_{\lambda}}{\partial P}(P_*(y), w_*(y), y)(P_2(y) - P_1(y))\right| 
+ \left|\frac{\partial F_{\lambda}}{\partial w}(P_*(y), w_*(y), y)(w_2(y) - w_1(y))\right|\right] y dy
\nonumber\\
 &\leq& \int_0^x 
\left[ C_1(R) |P_2(y) - P_1(y)| + C_2(R) |w_2(y) - w_1(y)| \right] y dy,
\label{Eq:TPDiffEst}
\end{eqnarray}
with the constants
\begin{equation*}
C_1(R) := \max\limits_{\substack{\frac{1}{2} \leq P \leq 1 \\ w_0 \leq w \leq 1 \\ 0 \leq y \leq R }}
\left|\frac{\partial F_{\lambda}}{\partial P}(P, w, y)\right|,
\qquad
C_2(R) := \max\limits_{\substack{\frac{1}{2} \leq P \leq 1 \\ w_0 \leq w \leq 1 \\ 0 \leq y \leq R }}
\left|\frac{\partial F_{\lambda}}{\partial w}(P, w, y)\right|.
\end{equation*}
Taking the supremum over $x$ on both sides of the inequality~(\ref{Eq:TPDiffEst}) one obtains the estimate
\begin{equation}
\|TP_2 - TP_1\|_{\infty} \leq \frac{R^2}{2}
\left[ C_1(R)\|P_2 - P_1\|_\infty + C_2(R) \|w_2 - w_1\|_\infty\right],
\label{Eq:TPDiffEstBis}
\end{equation}
for all $P_1,P_2\in A_R$ and $w_1,w_2\in B_R$. Furthermore, using the definition~(\ref{Eq:WA}), one obtains in a similar manner the estimate
\begin{equation}
|w_2(x) - w_1(x)| \leq \frac{3}{x^3}\int_0^x |e(P_2(y)) - e(P_1(y))| y^2\, dy
\leq C_3 \| P_2 - P_1 \|_\infty,
\label{Eq:wDiffEst}
\end{equation}
with the constant
\begin{equation*}
C_3 := \max\limits_{\frac{1}{2} \leq P \leq 1}\left| \frac{de}{dP}(P)\right|,
\end{equation*}
where we have used that fact that $e: [1/2, 1] \to \Real$ is a continuously differentiable function due to the properties of the function $\varepsilon(P)$ defined in~(\ref{Eq:epsilonp}). Combining the two estimates~(\ref{Eq:TPDiffEstBis},\ref{Eq:wDiffEst}) one obtains, finally
\begin{equation}
\|TP_2 - TP_1\|_\infty \leq L(R)\|P_2 - P_1\|_\infty,\qquad
L(R) := \frac{R^2}{2}\left[ C_1(R) + C_2(R) C_3 \right],
\end{equation}
for all $P_1,P_2\in A_R$. Since $C_1(R)$ and $C_2(R)$ decrease with $R$, it is clear that one can choose $R > 0$ small enough such that $L(R) < 1$ and $T: A_R\to A_R$ describes a contraction on $A_R$. Now we can use the contraction mapping principle  (Theorem~\ref{Thm:Banach}) to show:

\begin{Teo}
\label{Thm:LocalExistence}
For small enough $R > 0$, there exists a unique, continuously differentiable solution $P: (0,R)\to \Real$ of the dimensionless TOV equation~(\ref{Eq:TolmanABis}) satisfying $\lim\limits_{x\to 0} P(x) = 1$.
\end{Teo}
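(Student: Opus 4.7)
The plan is to apply Theorem~\ref{Thm:Banach} directly to the operator $T$ defined in Eq.~(\ref{Eq:IntTOV}), acting on the closed subset $A_R$ of the Banach space $(X_R,\|\cdot\|_\infty)$. All the ingredients needed are already assembled in the discussion preceding the theorem: $A_R$ is non-empty and closed, and for $R$ small enough to satisfy both the inequality~(\ref{Eq:De}) and the condition $L(R)<1$, the operator $T$ maps $A_R$ into itself and is a contraction. Fixing such an $R$, Theorem~\ref{Thm:Banach} produces a unique fixed point $P^*\in A_R$ of $T$, which by the very definition of $A_R$ satisfies $\lim_{x\to 0} P^*(x)=1$.

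Next I would upgrade this fixed point from a continuous solution of the integral equation~(\ref{Eq:IntTOV}) to a continuously differentiable solution of the dimensionless TOV equation~(\ref{Eq:TolmanABis}). Since $P^*\in A_R$, the associated mean density $w^*$ defined by Eq.~(\ref{Eq:WA}) lies in $B_R$ and is continuous on $(0,R]$. The estimate $1-2\lambda w^*(y)y^2\geq 1-2\lambda R^2>0$ ensures that the integrand appearing in $TP^*$ is a continuous function of $y\in(0,R]$, so the fundamental theorem of calculus applied to $P^*=TP^*$ yields that $P^*$ is $C^1$ on $(0,R)$ with derivative equal to the right-hand side of Eq.~(\ref{Eq:TolmanABis}), as desired.

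For uniqueness I would argue that any continuous solution $\tilde P:(0,R)\to\Real$ of Eq.~(\ref{Eq:TolmanABis}) with $\tilde P(x)\to 1$ as $x\to 0$ must, by continuity, lie in $A_{R'}$ for some $R'\in(0,R]$ small enough that $\frac{1}{2}\leq\tilde P(x)\leq 1$ on $(0,R']$; integrating the ODE from $0$ to $x$ (and using the limit at the origin) shows that $\tilde P|_{(0,R']}$ satisfies the fixed point equation~(\ref{Eq:IntTOV}) on $A_{R'}$, so it coincides with $P^*$ there by the Banach contraction theorem. A standard continuation argument (using the fact that away from $x=0$ the right-hand side of Eq.~(\ref{Eq:TolmanABis}) is smooth in $P$, $w$, and locally Lipschitz, and that the defining integral for $w$ makes it determined by $P$) then propagates the agreement to the whole of $(0,R)$.

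The main obstacle, in my view, is not any single deep step but rather the careful handling of the regularity at the singular endpoint $x=0$: one must verify that the limit condition $P^*(x)\to 1$ together with the integral equation is strong enough to force $C^1$ regularity and to pin down the solution uniquely, despite the fact that the ODE~(\ref{Eq:TolmanABis}) is singular there and that standard Picard--Lindel\"of theory does not apply. The choice of the space $A_R$ (bounded below by $\frac{1}{2}$, so that $w^*$ is uniformly bounded away from zero and the denominator $1-2\lambda w^* y^2$ is uniformly bounded away from zero on $(0,R]$) is precisely what turns a nominally singular problem into a clean fixed point problem, and this is where essentially all the work has been invested in the preceding pages.
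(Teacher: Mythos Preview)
Your proposal is correct and follows essentially the same approach as the paper: apply the contraction mapping principle to $T$ on $A_R$, differentiate the fixed point identity $P=TP$ to obtain a $C^1$ solution of Eq.~(\ref{Eq:TolmanABis}), and invoke uniqueness of the fixed point for the uniqueness of the solution. Your treatment of uniqueness (first restricting to a smaller $R'$ so that $\tilde P\in A_{R'}$, then continuing via standard ODE theory away from the singular point) is in fact slightly more careful than the paper's terse assertion that any other solution ``would also be a fixed point of $T$''.
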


\begin{proof}
Theorem~\ref{Thm:Banach} and the previous observations guarantee that for small enough $R > 0$ the map $T$ has a unique fixed point $P$ in $A_R$. Since $TP: (0,R)\to \Real$ is differentiable, $P = TP$ is differentiable as well and differentiating both sides of the equation $P(x) = TP(x)$ with respect to $x$ one finds that Eq.~(\ref{Eq:TolmanABis}) is satisfied for all $0 < x < R$, and hence $dP/dx$ is also continuous.

Regarding the uniqueness property, if $\tilde{P}: (0,R)\to \Real$ was another continuously differentiable solution of Eq.~(\ref{Eq:TolmanABis}) such that $\lim\limits_{x\to 0}\tilde{P}(x) = 1$, then $\tilde{P}$ would also be a fixed point of $T$ and hence would agree with $P$. 
\end{proof}

Finally, $\phi$ is obtained by integrating both sides of Eq.~(\ref{Eq:TolmanA}):
\begin{equation}
\phi(x) = \phi_c + \int_0^x \frac{w(y) + 3\lambda P(y)}{1 - 2\lambda y^2 w(y)} y dy,\qquad
0\leq x < R,
\end{equation}
with a constant of integration $\phi_c$ denoting the central value of $\phi$. If the solution exists globally, one can adjust this constant such that $\phi(x)\to 0$ for $x\to \infty$. Equivalently, if a global solution with finite radius $x_* > 0$ exists (sufficient conditions for this to occur will be discussed in the next section), one can choose the value of $\phi_c$ such that $\phi(x_*)$ matches its Schwarzschild value $\phi(x_*) = \frac{1}{2} \log\left(1 - \frac{2M}{\ell x_*}\right)$, with $M := \ell\lambda x_*^3 w(x_*)$ the total mass of the configuration.

In this way, one obtains a unique, continuously differentiable solution $(\phi(x),P(x))$ of Eqs.~(\ref{Eq:TolmanA}) on a small interval $(0,R)$ near the center with the required boundary conditions $\phi(0) = \phi_c$ and $P(0) = 1$. Moreover, with some algebra work one can show that the original Euler-Einstein equations~(\ref{Eq:Einstein1},\ref{Eq:Einstein2},\ref{Eq:Einstein3}) are satisfied.

%%%%%%%%%%%%%%%%%%%%%%%%%%%%%%%%%%%%%%%%%
\section{Global existence of finite radius solutions and Buchdahl bound}
\label{Sec:GlobalExistence}
%%%%%%%%%%%%%%%%%%%%%%%%%%%%%%%%%%%%%%%%%

In the previous section we proved the existence of a unique solution $P: (0,R)\to \Real$ of the dimensionless TOV equation~(\ref{Eq:TolmanABis}) on a small interval $(0,R)$, which satisfies the required boundary condition $\lim\limits_{x\to 0} P(x) = 1$ at the center, see Theorem~\ref{Thm:LocalExistence}. In this section, we show that under suitable hypotheses on the equation of state, this solution can be extended to an interval $(0,x_*)$ with $x_* > R$ describing the surface of the star, which is characterized by the condition $\lim\limits_{x\to x_*} P(x) = 0$ of vanishing pressure.

To prove this result, we define
\begin{align*}
x_*  & := \sup\bigg\{ x_1 > 0 \; \biggr\rvert \; P: (0, x_1) \to \Real 
\hbox{ is a continuously differentiable solution of Eq.~(\ref{Eq:TolmanABis}) satisfying }
\lim\limits_{x\to 0} P(x) = 1 \\
& \qquad\qquad\qquad\qquad
\hbox{and such that $0 < P(x) \leq 1$ and $1 - 2\lambda x^2 w(x) > 0$ for all $x\in (0,x_1)$} 
\bigg\}.
\end{align*}
According to Theorem~\ref{Thm:LocalExistence}, $x_* > 0$ is well-defined. There are two alternatives. Either
\begin{itemize}
\item[(a)] $x_* < \infty$ is finite, or
\item[(b)] $x_* = \infty$ is infinite.
\end{itemize}
Moreover, since $dP/dx < 0$, $P(x)$ is a monotonously decreasing function and case (a) occurs either if
\begin{itemize}
\item[(a.1)] $\displaystyle\lim_{x\to x_*}{[1 - 2\lambda x^2 w(x)]} > 0$ and $\displaystyle\lim_{x \to x_*}{P(x)} = 0$, or if
\item[(a.2)] $\displaystyle\lim_{x\to x_*}{[1 - 2\lambda x^2 w(x)]} = 0$.
\end{itemize}
The central result of this section is to show that under the conditions $(i)$--$(iv)$ in section~\ref{sec:EquationState}, only the case (a.1) can occur if $\gamma_1 > 4/3$, which means that the local solution has a unique extension describing a star of finite radius $R_* = \ell x_* > 0$. The strategy of the proof is the following: first, we eliminate case (b), i.e. we exclude the possibility of a star with infinite extension. Subsequently, we eliminate case (a.2) by proving that the averaged density function $w(x)$ cannot grow too fast to make the denominator in Eq.~(\ref{Eq:TolmanABis}) zero. As a by-product of this result, we will also obtain a bound on the compactness ratio
\begin{equation}
\frac{2m(r)}{r} = 2\lambda x^2 w(x),
\end{equation}
which shows that it is, in fact, not only smaller than one (as required to eliminate case (a.2)) but even smaller than $8/9$ for all $0 < x < x_*$. In particular, this implies that the compactness ratio at the surface of the star $r \to R_*$ is bounded from above by the well-known Buchdahl value $8/9$.

We start with the following theorem which eliminates case (b):

\begin{Teo}
\label{Thm:FiniteRadius}
Suppose the conditions $(i)$--$(iv)$ in section~\ref{sec:EquationState} are satisfied with the lower adiabatic bound $\gamma_1 > 4/3$. Then $x_* < \infty$ is finite.
\end{Teo}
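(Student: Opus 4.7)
The strategy is to argue by contradiction: suppose $x_* = \infty$. Then $P$ is a continuously differentiable, strictly decreasing function on $(0,\infty)$ with $0 < P(x) \leq 1$ and $1 - 2\lambda x^2 w(x) > 0$ throughout, so the limit $P_\infty := \lim_{x \to \infty} P(x) \in [0, 1)$ exists. I first claim $P_\infty = 0$. Indeed, since $P$ is decreasing and $e$ is increasing, $e(P(y)) \geq e(P(x))$ for $y \leq x$, whence $w(x) \geq e(P(x)) \geq e(P_\infty)$. If $P_\infty > 0$ were positive, Eq.~(\ref{Eq:TolmanABis}) would give $-P'(x) \geq e(P(x)) \cdot x\, w(x) \geq [e(P_\infty)]^2\, x$, and integration would produce $P(x) \to -\infty$, contradicting $P > 0$.

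The second step extracts a global integrability bound via an enthalpy identity. Define $h(P) := \int_0^P dQ / [e(Q) + \lambda Q]$. From condition $(iii)$, a short computation based on~(\ref{Eq:pInequality}) and~(\ref{Eq:epsilonp}) yields the asymptotic lower bound $e(P) \geq c_0 P^{1/\gamma_1}$ for all sufficiently small $P > 0$, with some $c_0 > 0$. Because $1/\gamma_1 < 1$ (only $\gamma_1 > 1$ is used here), the integrand defining $h$ is integrable at the origin, so $h(0^+) = 0$ and moreover $h(P) \leq C_h P^{1 - 1/\gamma_1}$ for small $P$, with some $C_h > 0$. The first equality in~(\ref{Eq:TolmanA}) integrates to the conservation law $\phi(x) + h(P(x)) = \phi_c + h(1)$, so $\phi_\infty := \lim_{x\to\infty} \phi(x) = \phi_c + h(1)$ is finite. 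On the other hand, the second equality in~(\ref{Eq:TolmanA}), together with $w, P \geq 0$ and $1 - 2\lambda x^2 w \leq 1$, gives $\phi'(x) \geq x w(x) = M(x)/x^2$, where $M(x) := x^3 w(x)$ is non-decreasing. Integrating from $x$ to $\infty$ and using monotonicity, $\phi_\infty - \phi(x) \geq M(x)/x$. Combining with $\phi_\infty - \phi(x) = h(P(x)) \leq C_h P(x)^{1 - 1/\gamma_1}$ produces the asymptotic lower bound $P(x) \geq C_P [M(x)/x]^{\gamma_1/(\gamma_1 - 1)}$ for large $x$. In addition, integrating $\phi'(x) \geq M(x)/x^2$ from $0$ to $\infty$ shows $\int_0^\infty M(y)/y^2\,dy < \infty$, which by monotonicity of $M$ gives $M(x)/x \to 0$ as $x \to \infty$.

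The third step is where $\gamma_1 > 4/3$ enters decisively. Since $e \circ P$ is decreasing in $y$, $M(x) = 3\int_0^x y^2 e(P(y))\,dy \geq x^3 e(P(x))$. Applying $e(P) \geq c_0 P^{1/\gamma_1}$ together with the lower bound on $P$ from step~2, elementary algebra yields the key inequality $M(x)^{(\gamma_1 - 2)/(\gamma_1 - 1)} \geq C'\, x^{(3\gamma_1 - 4)/(\gamma_1 - 1)}$ for $x$ large, with some $C' > 0$. Since $\gamma_1 > 4/3$, the exponent of $x$ on the right is strictly positive, so the right-hand side diverges. A case analysis on the sign of $\gamma_1 - 2$ closes the argument: if $\gamma_1 > 2$, raising to the power $(\gamma_1 - 1)/(\gamma_1 - 2) > 0$ gives $M(x) \geq C'' x^\alpha$ with $\alpha = (3\gamma_1 - 4)/(\gamma_1 - 2) > 1$, contradicting $M(x) = o(x)$; if $4/3 < \gamma_1 < 2$, the exponent of $M$ on the left is negative, forcing $M(x) \to 0$, which contradicts the uniform lower bound $M(x) \geq M(x_0) > 0$ for any $x_0 > 0$; and if $\gamma_1 = 2$ the inequality degenerates to $1 \geq C' x^2$, impossible for large $x$. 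The main obstacle I anticipate is the clean derivation of the two asymptotic bounds $e(P) \geq c_0 P^{1/\gamma_1}$ and $h(P) \leq C_h P^{1 - 1/\gamma_1}$ from condition $(iii)$ and the integral formula~(\ref{Eq:epsilonp}), together with the careful bookkeeping needed to combine them with the enthalpy identity; once these are in hand, the concluding three-way case analysis on $\gamma_1 - 2$ is routine.
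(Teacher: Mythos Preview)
Your proof is correct and follows essentially the same strategy as the paper's: both argue by contradiction, establish $P_\infty=0$, derive the key estimate $e(P)\geq c_0 P^{1/\gamma_1}$ for small $P$, integrate the TOV relation to obtain $h(P(x))\geq M(x)/x$ (the paper writes this as $P(x)^{1-1/\gamma_1}\geq C_1 x^2 w(x)$ without naming $h$ or $\phi$), combine with $w(x)\geq e(P(x))$, and finish with a case split on the sign of $\gamma_1-2$. Your packaging via the conserved quantity $\phi+h(P)$ and your single unified inequality $M^{(\gamma_1-2)/(\gamma_1-1)}\geq C'\,x^{(3\gamma_1-4)/(\gamma_1-1)}$ are cosmetic variations on the paper's argument; your proof that $P_\infty=0$ by direct integration of $-P'$ is a valid alternative to the paper's, which instead observes that $P_\infty>0$ would force the denominator $1-2\lambda x^2 w(x)$ to vanish.
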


\begin{proof}
We suppose that $x_* = \infty$ is infinite and show that this leads to a contradiction. Since $x_* = \infty$ implies that $P$ is bounded, and since $P$ is monotonously decreasing, the limit
\begin{equation}
\label{Eq:Pinf}
P_\infty := \lim\limits_{x\to \infty} P(x) \geq 0
\end{equation}
exists. The remainder of the proof is based on the following two simple lemmas whose proofs will be given further below. The first lemma shows that $P_\infty$ must be zero:

\begin{Lem}
\label{Lem:1}
Suppose $x_* = \infty$. Then $P_\infty = 0$.
\end{Lem}

The second lemma provides a lower bound on the energy density which will be key in the proof of the theorem:

\begin{Lem}
\label{Lem:2}
Any equation of state fulfilling the conditions $(i)$--$(iv)$ in section~\ref{sec:EquationState} satisfies the following estimate: there are constant $C > 0$ and $P_1 > 0$ such that
\begin{equation}
\label{Eq:Estimatione}
e(P) \geq CP^{1/\gamma_1},
\end{equation}
for all $0 \leq P\leq P_1$.
\end{Lem}

We now return to the proof of Theorem~\ref{Thm:FiniteRadius} and show that $x_* = \infty$ and $P_\infty = 0$ leads to a contradiction if $\gamma_1 > 4/3$. To this purpose, we use Eq.~(\ref{Eq:TolmanABis}) to estimate
\begin{equation}
-\frac{1}{e(P(x)) + \lambda P(x)} \frac{d}{dx} P(x) \geq x w(x)
\end{equation}
for all $x > 0$. Integrating both sides of this inequality yields
\begin{equation}
\label{Eq:Est}
- \int_x^\infty \frac{1}{e(P(y)) + \lambda P(y)} \frac{dP}{dy}(y) dy \geq \int_x^\infty w(y) y dy.
\end{equation}
Using the variable substitution $P = P(y)$ and the estimate~(\ref{Eq:Estimatione}), the integral on the left-hand side can be rewritten and estimated according to
\begin{equation}
- \int_x^\infty \frac{1}{e(P(y)) + \lambda P(y)} \frac{dP}{dy}(y) dy
 = \int_0^{P(x)} \frac{dP}{e(P) + \lambda P} 
 \leq \int_0^{P(x)} \frac{dP}{C P^{1/\gamma_1}} 
  = \frac{P(x)^{1 - 1/\gamma_1}}{C(1 - 1/\gamma_1)},
\end{equation}
for all large enough $x\geq x_1$, such that $P(x_1)\leq P_1$. This yields the following lower bound on $P$:
\begin{equation}
\label{Eq:EstimationLHS}
P(x)^{1 - 1/\gamma_1} 
 \geq - C_1\int_x^\infty \frac{1}{e(P(y)) + \lambda P(y)} \frac{dP}{dy}(y) dy,
\end{equation}
with $C_1 := C(1 - 1/\gamma_1) > 0$ a constant. Next, we estimate the integral on the right-hand side of Eq.~(\ref{Eq:Est}). Recalling that $\overline{m}(x) := x^3 w(x)$ is proportional to the mass function, which is an increasing function of $x$, we obtain
\begin{equation}
\label{Eq:EstimationRHS}
\int_x^\infty w(y) y dy = \int_x^\infty \frac{\overline{m}(y)}{y^2} dy
 \geq \overline{m}(x) \int_x^\infty\frac{dy}{y^2} = \frac{\overline{m}(x)}{x} = x^2 w(x),
\end{equation}
for all $x > 0$. The three estimates~(\ref{Eq:Est},\ref{Eq:EstimationLHS},\ref{Eq:EstimationRHS}) imply the following inequality between $P$ and $w$:
\begin{equation}
P(x)^{1 - 1/\gamma_1} \geq C_1 x^2 w(x)
\label{Eq:Est2}
\end{equation}
for all $x\geq x_1$. Combining this with the estimate $w(x) \geq e(P(x))$ (which follows directly from the definition~(\ref{Eq:WA}) of $w(x)$ and the monotonicity properties of $e$ and $P$) and the key estimate~(\ref{Eq:Estimatione}) yields
\begin{equation}
P(x)^{1 - 2/\gamma_1} \geq C_2 x^2
\label{Eq:Est3}
\end{equation}
for all $x\geq x_1$, with the new constant $C_2 := C C_1 = C^2(1 - 1/\gamma_1) > 0$. This already yields a contradiction for $\gamma_1 \geq 2$, since in this case the left-hand side converges to zero (or stays constant if $\gamma_1 = 2$) while the right-hand side goes to infinity as $x\to \infty$. This proves the theorem for $\gamma_1\geq 2$.

It remains to analyze the case $4/3 < \gamma_1 < 2$. For this, we use again the key estimate~(\ref{Eq:Estimatione}) and the fact that $e(P(x)) \leq w(x)$, obtaining $P(x)^{1/\gamma_1} \leq C^{-1} w(x)$, or
\begin{equation}
\left[\frac{w(x)}{C}\right]^{\gamma_1} \geq P(x)
\end{equation}
for all $x\geq x_1$. Combining this with the inequality~(\ref{Eq:Est2}) yields
\begin{equation}
w(x)^{\gamma_1 -2} \geq C_3 x^2
\end{equation}
for all $x\geq x_1$ with the positive constant $C_3 = C_1 C^{\gamma_1-1}$. Since $w(x) = \overline{m}(x)/x^3$ and $\gamma_1 - 2 < 0$ this can be rewritten as
\begin{equation}
\overline{m}(x)^{2 - \gamma_1} \leq \frac{1}{C_3 x^{3\gamma_1 - 4}},
\end{equation}
for $x\geq x_1$. However, since $4/3 < \gamma_1  < 2$ this leads to a contradiction since in the limit $x\to \infty$ the right-hand side converges to $0$ while the mass function $\overline{m}(x)$ is positive and increasing. This concludes the proof of the theorem.
\end{proof}

\begin{proof}[Proof of Lemma~\ref{Lem:1}]
Again, the proof is by contradiction. If $P_\infty \neq 0$, then the function $P$ would satisfy
$P(x) \geq P_{\infty} > 0$ for all $x > 0$, and since $e(P)$ is monotonously increasing, this would imply that $e(P(x)) \geq e(P_\infty) =: e_\infty > 0$ for all $x > 0$. According to Eq.~(\ref{Eq:WA}) this would yield $w(x) \geq e_{\infty} > 0$ for all $x > 0$, which in turn would imply that
\begin{equation}
1 - 2\lambda x^2 w(x) \leq 1 - 2\lambda x^2 e_\infty
\end{equation}
for all $x > 0$. However, this would contradict the assumption $x_* = \infty$ which requires $1 - 2\lambda x^2 w(x) > 0$ for all $x > 0$. Therefore, we must have $P_\infty = 0$ as claimed.
\end{proof}

\begin{proof}[Proof of Lemma~\ref{Lem:2}]
For the proof of this lemma, we use the inequality~(\ref{Eq:pInequality}) from section~\ref{sec:EquationState}, which implies
%
%\begin{equation}
%\label{Eq:pbar}
%p(\overline{n}) \geq p(n)\left(\frac{\overline{n}}{n}\right)^{\gamma_2}
%\end{equation}
%
\begin{equation}
\label{Eq:n}
n \geq n_2\left[ \frac{p(n)}{p(n_2)}\right]^{1/\gamma_1}
\end{equation}
for all small enough $n_2\geq n > 0$. Using the assumptions $(i)$ and $(iv)$ from section~\ref{sec:EquationState} and the estimate~(\ref{Eq:n}) in the expression~(\ref{Eq:epsilonp}) for $\varepsilon(p)$ one obtains,
\begin{equation}
\varepsilon(p) \geq ne_0 \geq  n_2 e_0  \left[ \frac{p(n)}{p(n_2)}\right]^{1/\gamma_1},
\end{equation}
for all small enough $0 < n\leq n_2$. Setting $C_2 := n_2 e_0/p_2^{1/\gamma_1}$ with $p_2 := p(n_2)$ it follows from this that
\begin{equation}
\varepsilon(p) \geq C_2 p^{1/\gamma_1}
\end{equation}
for all $0 < p\leq p_2$. Since $\varepsilon(p) = \varepsilon_c e(P)$ and $p = p_c P$ the lemma follows. 
\end{proof}

To conclude the global existence proof, it remains to eliminate case (a.2). In fact, we obtain a stronger result which shows that for all $0 < x < x_*$, one must have $1 - 2\lambda x^2 w(x) = 1 - 2m(r)/r < 1/9$:

\begin{Teo}
\label{Thm:Buchdahl}
Let $P: (0,x_*)\to \Real$ be the maximally extended continuously differentiable solution of the dimensionless TOV Eq.~(\ref{Eq:TolmanABis}) such that $\lim\limits_{x\to 0} P(x) = 1$, $0 < P(x) < 1$ and $1 - 2\lambda x^2 w(x) > 0$ for all $0 < x < x_*$. Then, $2m(r)/r = 2\lambda x^2 w(x) < 8/9$ for all $0 < x < x_*$.
\end{Teo}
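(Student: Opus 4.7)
The plan is to adapt Buchdahl's classical proof, whose essential ingredients are (a) isotropy of the pressure, (b) monotonicity of the mean energy density, and (c) a divergence identity obtained from the isotropy constraint on the Einstein tensor. My first step is to verify (b) in the present setting: the TOV equation~(\ref{Eq:TolmanABis}) forces $P(x)$ to be strictly decreasing, so $\varepsilon(r) = \varepsilon_c e(P(x))$ is non-increasing in $r$ because $e$ is monotone. A short calculation using $\bar\rho'(r) = (3/r)[\varepsilon(r) - \bar\rho(r)]$ together with $\bar\rho(r) \geq \varepsilon(r)$ (immediate from $\varepsilon$ being non-increasing) then gives $\bar\rho'\leq 0$, equivalently $w(x)$ is non-increasing.

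For (c), I would start from the isotropy condition $G^r{}_r = G^\vartheta{}_\vartheta$ (equate the left-hand sides of~(\ref{Eq:Einstein2}) and~(\ref{Eq:Einstein3})), substitute $e^{-2\Psi} = 1 - 2m/r$ and $\chi := e^\Phi$, and rearrange the result into divergence form. The expected outcome is a Buchdahl-type identity
\[
\frac{d}{dr}\!\left[\frac{1}{r}\sqrt{1-\tfrac{2m(r)}{r}}\,\chi'(r)\right] = \frac{\chi(r)}{\sqrt{1 - 2m(r)/r}}\,\frac{d}{dr}\!\left(\frac{m(r)}{r^3}\right).
\]
Since $d(m/r^3)/dr = (4\pi/3)\bar\rho' \leq 0$ by step one, the function $F(r) := (1/r)\sqrt{1-2m/r}\,\chi'(r)$ is non-increasing on $(0,x_*)$.

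Now fix $r_0 \in (0,x_*)$ and set $\mu := 2m(r_0)/r_0$. Monotonicity of $\bar\rho$ gives $2m(r)/r \geq \mu(r/r_0)^2$ for $r \leq r_0$, while $F(r) \geq F(r_0) > 0$. Combining these yields $\chi'(r) \geq r F(r_0)/\sqrt{1-\mu(r/r_0)^2}$, and integrating from $0$ to $r_0$ (elementary substitution $u = \mu(r/r_0)^2$) produces
\[
\chi(r_0) - \chi(0) \;\geq\; \frac{F(r_0)\,r_0^2}{\mu}\bigl[1 - \sqrt{1-\mu}\bigr].
\]
Expressing $F(r_0)$ via Eq.~(\ref{Eq:Phi}) and discarding the non-negative pressure contribution $4\pi r_0^3 p(r_0)$ reduces the right-hand side to $\tfrac{1}{2}\chi(r_0)\bigl[(1-\mu)^{-1/2} - 1\bigr]$. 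Rearranging yields
\[
\chi(0) \;\leq\; \chi(r_0)\left[\tfrac{3}{2} - \tfrac{1}{2\sqrt{1-\mu}}\right],
\]
and strict positivity of $\chi(0) = e^{\Phi(0)}$ forces the bracket to be positive, equivalently $\mu < 8/9$.

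I expect the main obstacle to be step~(c): deriving the divergence identity cleanly from $G^r{}_r = G^\vartheta{}_\vartheta$ requires careful bookkeeping when eliminating $\Psi$ in favour of $m/r$ and recognising a total derivative on the left-hand side. Everything that follows is elementary calculus and algebra. It is noteworthy that the pressure enters the argument only through a non-negative term that is simply dropped, which is precisely what allows the bound to hold uniformly in $r$ and not merely at the surface of the star.
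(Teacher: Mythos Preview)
Your proposal is correct and follows essentially the same route as the paper's proof: the paper also subtracts the $(rr)$ and $(\vartheta\vartheta)$ Einstein equations to obtain the identity $e^{-\Phi-\Psi}\bigl[(\Phi'/r)e^{\Phi-\Psi}\bigr]' = (m/r^3)'$, which is your divergence identity rewritten with $\chi=e^\Phi$ and $e^{-\Psi}=\sqrt{1-2m/r}$; it then uses the monotonicity of $w(x)$ (equivalently $\bar\rho$), integrates from $0$ to a fixed radius, drops the non-negative pressure term in $\Phi'(r_0)$, and concludes $e^{-2\Psi(r_0)}>1/9$ from positivity of $e^{\Phi(0)}$. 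The only differences are cosmetic (the paper keeps $\Phi,\Psi$ rather than $\chi$ and $\sqrt{1-2m/r}$, and writes $r_2$ for your $r_0$).
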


\begin{proof}
The proof is a straightforward generalization to arbitrary radius $r\in (0,R_*)$ of standard arguments used to establish the Buchdahl bound, see for instance section 6.2 in Ref.~\cite{Wald}. For this, we set $r = \ell x$, $m(r) := \ell\lambda x^3 w(x)$, $\Psi(r) := -\frac{1}{2}\log[1 - 2m(r)/r) ]$ and use the fact that the Einstein equations~(\ref{Eq:Einstein1},\ref{Eq:Einstein2},\ref{Eq:Einstein3}) are satisfied. Subtracting Eq.~(\ref{Eq:Einstein2}) from Eq.~(\ref{Eq:Einstein3}) yields
\begin{equation}
\left[\Phi'' + \Phi'(\Phi' - \Psi') - \frac{\Phi' + \Psi'}{r}\right]e^{-2\Psi} - \frac{1}{r^2}\left(e^{-2\Psi} - 1\right) 
 = 0.
\end{equation}
Dividing both sides by $r$ one can rewrite this as the following identity:
\begin{equation}
e^{-\Phi(r) - \Psi(r)}\left[\frac{\Phi'(r)}{r}e^{\Phi(r) - \Psi(r)}\right]' = \left[\frac{m(r)}{r^3}\right]'.
\label{Eq:Identity}
\end{equation}
Since $m(r)/r^3$ is proportional to the mean density, which is by itself proportional to $w(x)$, and since $x\frac{dw}{dx} = 3[ e(P(x)) - w(x) ]\leq 0$, the mean density is a non-increasing function. Therefore, it follows from Eq.~(\ref{Eq:Identity}) that
\begin{equation}
\label{Eq:Des1}
\left[\frac{\Phi '(r)}{r}e^{\Phi(r) - \Psi(r)}\right]' \leq 0.
\end{equation}
Next, let $0 < r < r_2 < R_* = \ell x_*$. Then, it follows that
\begin{equation}
\frac{\Phi'(r)}{r} e^{\Phi(r) - \Psi(r)} \geq \frac{\Phi'(r_2)}{r_2} e^{\Phi(r_2) - \Psi(r_2)} 
 = \frac{m(r_2) + 4\pi r_2^3 p(r_2)}{r_2^3\left[ 1 - \frac{2m(r_2)}{r_2} \right]} 
  e^{\Phi(r_2) - \Psi(r_2)},
\end{equation}
where we have used Eq.~(\ref{Eq:Phi}) to eliminate $\Phi'(r_2)$. Since $p(r_2)\geq 0$ and
\begin{equation}
1 - \frac{2m(r_2)}{r_2} = e^{-2\Psi(r_2)},
\label{Eq:a2}
\end{equation}
this inequality leads to
\begin{equation}
\Phi'(r) e^{\Phi(r)} \geq r e^{\Psi(r)}\frac{m(r_2)}{r_2^3} e^{\Phi(r_2) + \Psi(r_2)}.
\end{equation}
Integrating both sides from $r = 0$ to $r_2$ yields
\begin{equation}
e^{\Phi(r_2)} - e^{\Phi(0)} \geq e^{\Phi(r_2) + \Psi(r_2)}
\frac{m(r_2)}{r_2^3} \int_0^{r_2} \frac{r dr}{\sqrt{1 - \frac{2m(r)}{r}}}.
\end{equation}
To estimate the integral on the right-hand side, we use again the fact that $m(r)/r^3$ is a non-increasing function, such that $2m(r) \geq 2m(r_2) r^3/r_2^3$ for all $0\leq r\leq r_2$, and obtain
\begin{equation}
e^{\Phi(r_2)} - e^{\Phi(0)} \geq e^{\Phi(r_2) + \Psi(r_2)}
 \frac{m(r_2)}{r_2^3}\int_0^{r_2} \frac{r dr}{\sqrt{1 - \frac{2m(r_2)}{r_2^3} r^2}}
 = \frac{1}{2} e^{\Phi(r_2)} \left[ e^{\Psi(r_2)} - 1 \right],
\label{Eq:Des}
\end{equation}
where we have used Eq.~(\ref{Eq:a2}) again. Eq.~(\ref{Eq:Des}) implies that
\begin{equation}
0 < 2e^{\Phi(0)} \leq e^{\Phi(r_2)} \left[ 3 - e^{\Psi(r_2)} \right],
\end{equation}
which immediately yields the desired result:
\begin{equation}
1 - \frac{2m(r_2)}{r_2} = e^{-2\Psi(r_2)} > \frac{1}{9}.
\end{equation}
\end{proof}

%%%%%%%%%%%%%%%%%%%%%%%%%%%%%%%%%%%%%%%%%
\section{A numerical example}
\label{Sec:Numerical}
%%%%%%%%%%%%%%%%%%%%%%%%%%%%%%%%%%%%%%%%%

In the previous sections we have shown that for a given equation of state fulfilling the conditions $(i)$--$(iv)$ in section~\ref{sec:EquationState} with the lower adiabatic bound $\gamma_1 > 4/3$, there exists for each value of $p_c/\varepsilon_c > 0$ a unique solution of the TOV equation which describes a relativistic, spherical and static star of finite radius $R$ and mass $M$. In this section, we show by means of numerical calculation how to obtain the quantitative properties of the star, including the values of $R$ and $M$, the compactness ratio $2M/R$ and the pressure profile. For the sake of illustration we focus on the specific case of a polytropic equation of state of the form
\begin{equation}
p(n) = K n^{\gamma}
\label{Eq:polytrope}
\end{equation}
with $K$ a positive constant and $\gamma$ the adiabatic index which, in the results shown below, is fixed to the value $5/3$. As explained in appendix~\ref{App:StatFis}, this value corresponds to the low temperature and density limit of a monoatomic ideal gas. Integrating the first law for an isentropic fluid yields the corresponding expression for the energy density
\begin{equation}
\varepsilon(p) = n e_0 + \frac{K}{\gamma-1} n^\gamma 
 = e_0\left( \frac{p}{K} \right)^{1/\gamma} + \frac{p}{\gamma-1}.
\label{Eq:polytrope_eps}
\end{equation}
Rewritten in terms of the dimensionless quantities defined in Eq.~(\ref{Eq:Dimensionless}) and using the fact that $e(1) = 1$, this yields
\begin{equation}
e(P) = \frac{(\gamma-1-\lambda) P^{1/\gamma} + \lambda P}{\gamma-1},\qquad
0 < \lambda = \frac{p_c}{\varepsilon_c} < \gamma-1.
\end{equation}
(Note that for the case of a monoatomic gas one should also have $p_c/\varepsilon_c \ll 1$ in the low temperature limit, so that the example studied in this section is most probably not physically realistic for values of $\lambda$ lying close to $\gamma-1$.)

To perform the numerical integration of the TOV equation, we convert the integral equation~(\ref{Eq:WA})  for the dimensionless mean density field $w$ into the differential equation
\begin{equation}
\frac{d}{dx} w(x) = -\frac{3}{x}\left[ w(x) - e(P(x)) \right],
\label{Eq:dw}
\end{equation}
which is numerically integrated along with the dimensionless TOV equation~(\ref{Eq:TolmanABis}) using a standard fourth-order accurate Runge-Kutta scheme (see, for instance, section 7.5 in Ref.~\cite{oSmT12} and references therein). The integration is started at the center $x = 0$, where the right-hand side of Eq.~(\ref{Eq:dw}) is replaced with $0$, owing to the fact that both functions $w(x)$ and $P(x)$ behave as $1 + {\cal O}(x^2)$ near $x = 0$. (This can be inferred from the local existence theorem in section~\ref{Sec:LocalExistence}, the fixed point formula~(\ref{Eq:IntTOV}) and the definition of $w$ in Eq.~(\ref{Eq:WA}).) The integration is stopped as soon as $P$ becomes negative, which yields the dimensionless radius $R/\ell = x_*$ and the dimensionless total mass $M/\ell = 
\lambda x_*^3 w(x_*)$ of the star, up to a numerical error. (This error is monitored by varying the stepsize $\Delta x$ of the integrator.) Using Eqs.~(\ref{Eq:lDef}) and (\ref{Eq:polytrope_eps}) one finds that the length scale $\ell$ is given by
\begin{equation}
\ell = \ell_0 \lambda^{-\frac{2-\gamma}{2(\gamma-1)}}
\left(  1 - \frac{\lambda}{\gamma-1} \right)^{\frac{\gamma}{2(\gamma-1)}},\qquad
\ell_0 := \sqrt{\frac{3}{4\pi}}\left( \frac{K}{e_0^\gamma} \right)^{\frac{1}{2(\gamma-1)}},
\label{Eq:ell0}
\end{equation}
and hence we shall specify the results in terms of the alternative length scale $\ell_0$ which is independent of $\lambda$.

The results of the numerical integration for different values of $\lambda$ in the admissible range $0 < \lambda < \gamma - 1$ are shown in Table~\ref{Tab:Polytrope} and in Figs.~\ref{Fig:MassCompactness},\ref{Fig:MvsR} and~\ref{Fig:Profiles}. Note that for small values of $\lambda$ the mass increases while the radius of the star decreases as $\lambda$ grows, giving rise to more compact stars. However, as $\lambda$ continues to grow this trend is halted and $M/\ell_0$ reaches a maximum at about $\lambda\approx 0.12$ after which it starts decaying as $\lambda$ continues to grow until it reaches a local minimum around $\lambda\approx 0.5$ and starts growing again until reaching another local maximum. Similarly, the radius $R/\ell_0$ decreases until it reaches a local minimum at about $\lambda\approx 0.4$ after which it increases until reaching a local maximum. This behavior gives rise to the spiral structure shown in Fig.~\ref{Fig:MvsR}.

In the Newtonian limit $\lambda\to 0$, one may compare our results with the corresponding results from the Lane-Emden equation (see for instance section 3.3 in~\cite{Shapiro})
\begin{equation}
\frac{R}{\ell} = \frac{a}{\ell}\xi_1,\qquad
\frac{M}{\ell} = 3\frac{a^3}{\ell^3}\lambda\xi_1^2|\Theta'(\xi_1)|,\qquad
\frac{a^2}{\ell^2} = \frac{1}{3}\frac{\gamma}{\gamma-1}.
\end{equation}
For the present example $\gamma = 5/3$ one finds $\xi_1 \approx 3.65$, $\xi_1^2|\Theta'(\xi_1)| \approx 2.71$ and $a/\ell = \sqrt{5/6}$, which yields
\begin{equation}
\frac{R}{\ell} \approx 3.33,\qquad
\frac{M}{\ell} \approx 6.18\lambda,
\end{equation}
and compares well with the corresponding values in Table~\ref{Tab:Polytrope} for small $\lambda$.

Finally, we note again from the plots in Fig.~\ref{Fig:Profiles} that the relativistic stars with high $\lambda$ are much more compact than their Newtonian counterparts. We also note that although the compactness ratio $2M/R$ at the surface reaches a maximum at about $\lambda\approx 0.3$, the maximum of the local compactness ratio $2m(r)/r$ occurs inside (and not at the surface of) the star, and this maximum seems to be growing monotonously with $\lambda$. In all cases this maximum is less than $8/9$, as predicted by the local Buchdahl bound proven in Theorem~\ref{Thm:Buchdahl}. (Note that the Newtonian equations predict a compactness ratio of $2M/R\approx 3.71\lambda$ which can be larger than one.)

\begin{table*}[h]
\caption{Results for the dimensionless radius $R/\ell = x_*$, dimensionless total mass $M/\ell = \lambda x_*^3 w(x_*)$ and compactness ratio $2M/R = 2\lambda x_*^2 w(x_*)$ at the surface of the star for the polytropic equation of state~(\ref{Eq:polytrope}) and different values of $\lambda$. Also shown are the radii $R/\ell_0$ and masses $M/\ell_0$ in terms of the physical scale $\ell_0$ defined in Eq.~(\ref{Eq:ell0}) which is independent of $\lambda$. The stepsize used to produce these results is $\Delta x = 0.005$, and three significant figures are shown.
}
\begin{tabular}{c||c|c|c|c||c}
$\lambda$  & $R/\ell$ & $M/\ell$ & $R/\ell_0$ & $M/\ell_0$ & $2M/R$ \\
\hline
 $0.001$ & $3.33$ & $0.0615$ & $18.7$ & $0.0345$ & $0.00370$ \\
 $0.01$  & $3.29$  & $0.0582$ & $10.2$ & $0.181$ & $0.0353$ \\
 $0.05$  & $3.16$  & $0.232$ & $6.06$ & $0.446$ & $0.147$ \\
 $0.1$  & $3.08$  & $0.368$ & $4.47$ & $0.535$ & $0.239$ \\
 $0.2$  & $3.16$  & $0.524$ & $3.03$ & $0.502$ & $0.331$ \\
 $0.3$  & $3.68$  & $0.640$ & $2.36$ & $0.410$ & $0.348$ \\
 $0.4$  & $5.24$  & $0.812$ & $2.10$ & $0.325$ & $0.310$ \\
 $0.5$  & $11.3$  & $1.34$ & $2.37$ & $0.282$ & $0.238$ \\
 $0.6$  & $42.8$  & $5.12$ & $2.74$ & $0.327$ & $0.239$ \\
 $0.65$  & $234$  & $28.9$ & $2.59$ & $0.320$ & $0.246$ \\
\end{tabular}
\label{Tab:Polytrope}
\end{table*}

\begin{figure}[H]
\centerline{\includegraphics[width=9cm]{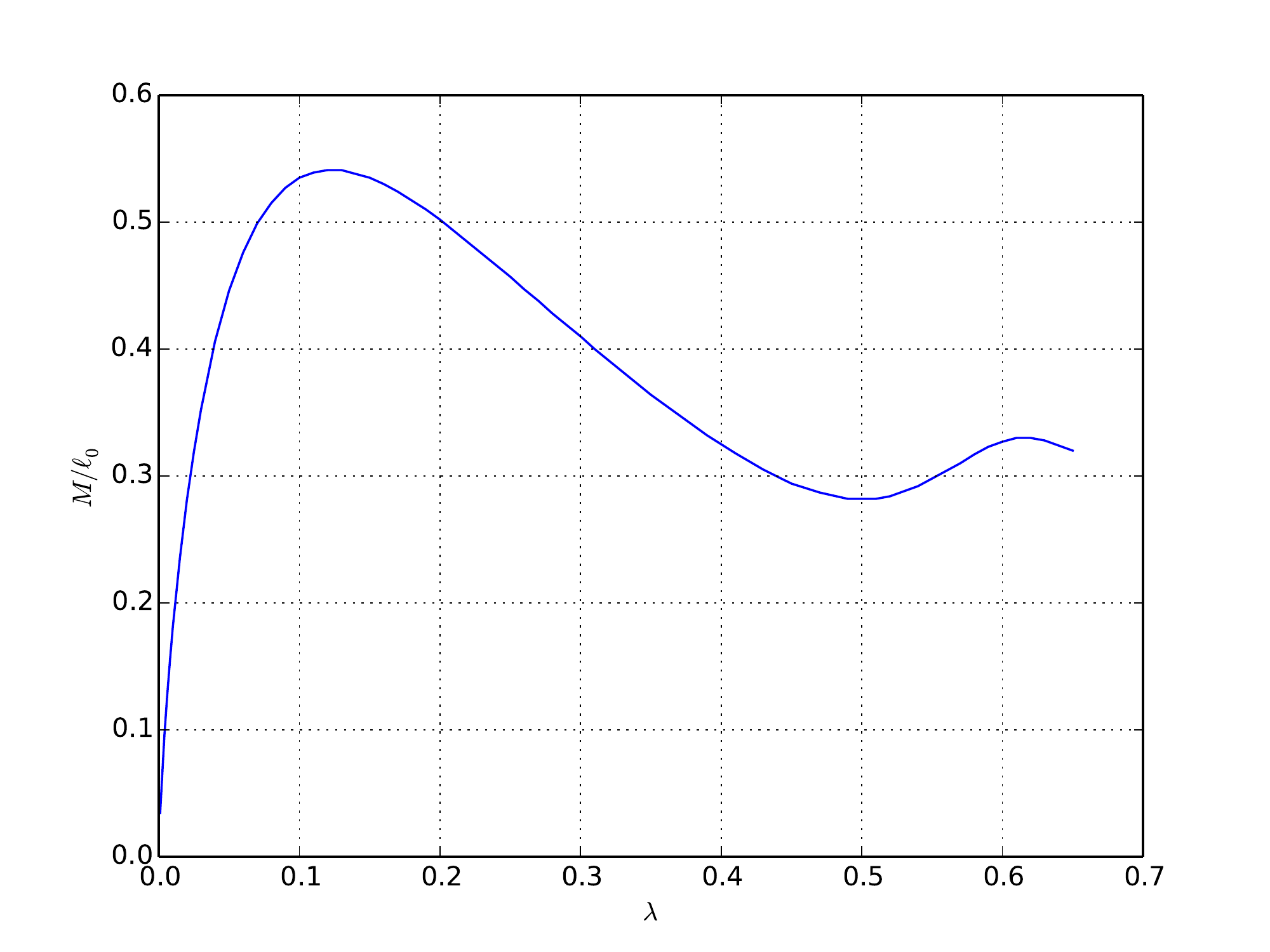}
\includegraphics[width=9cm]{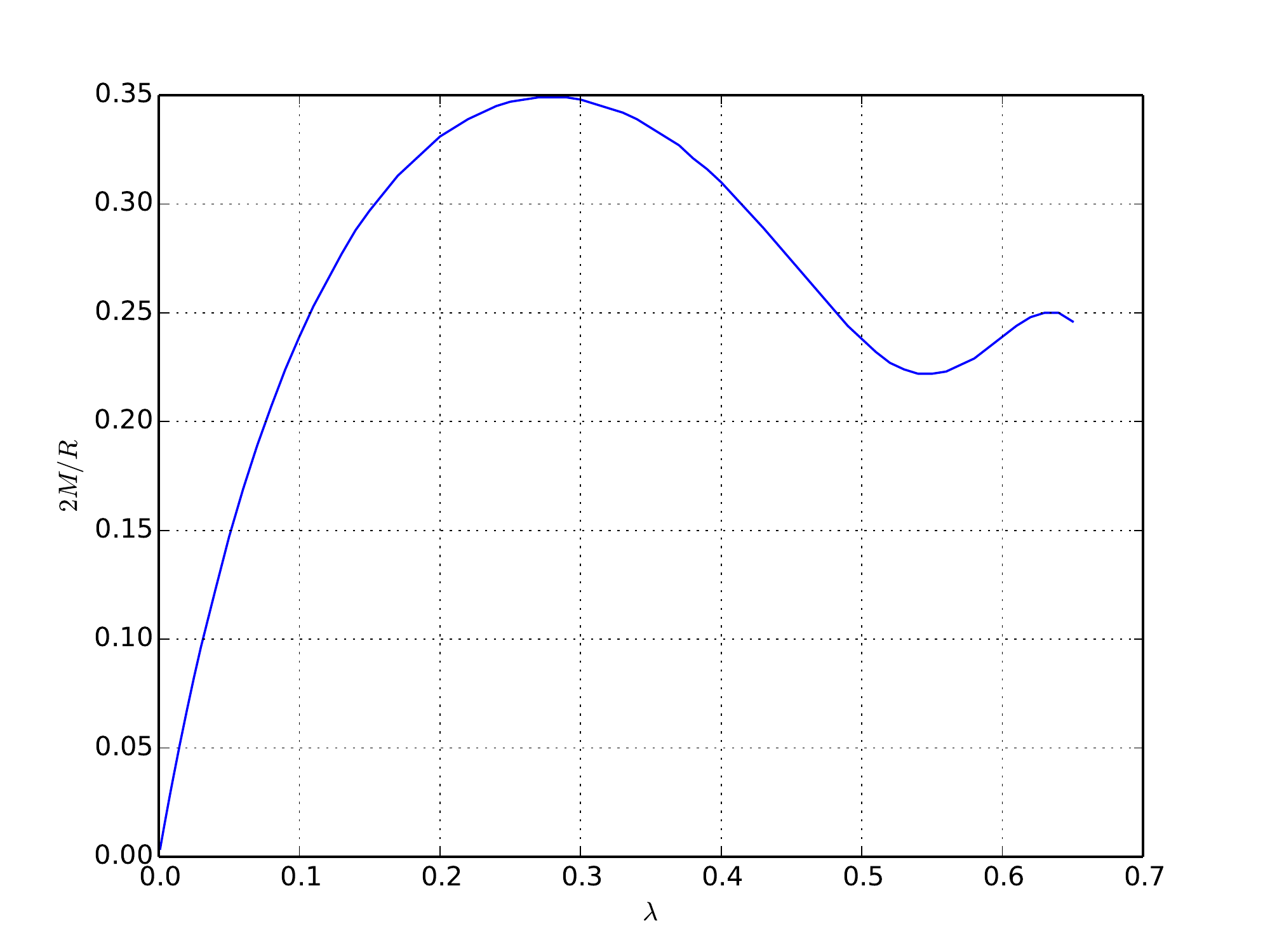}
}
\caption{Plots of the total mass $M/\ell_0$ (left panel) and the compactness ratio $2M/R$ at the surface of the star  (right panel) as a function of $\lambda$.}
\label{Fig:MassCompactness}
\end{figure}

\begin{figure}[H]
\centerline{\includegraphics[width=11cm]{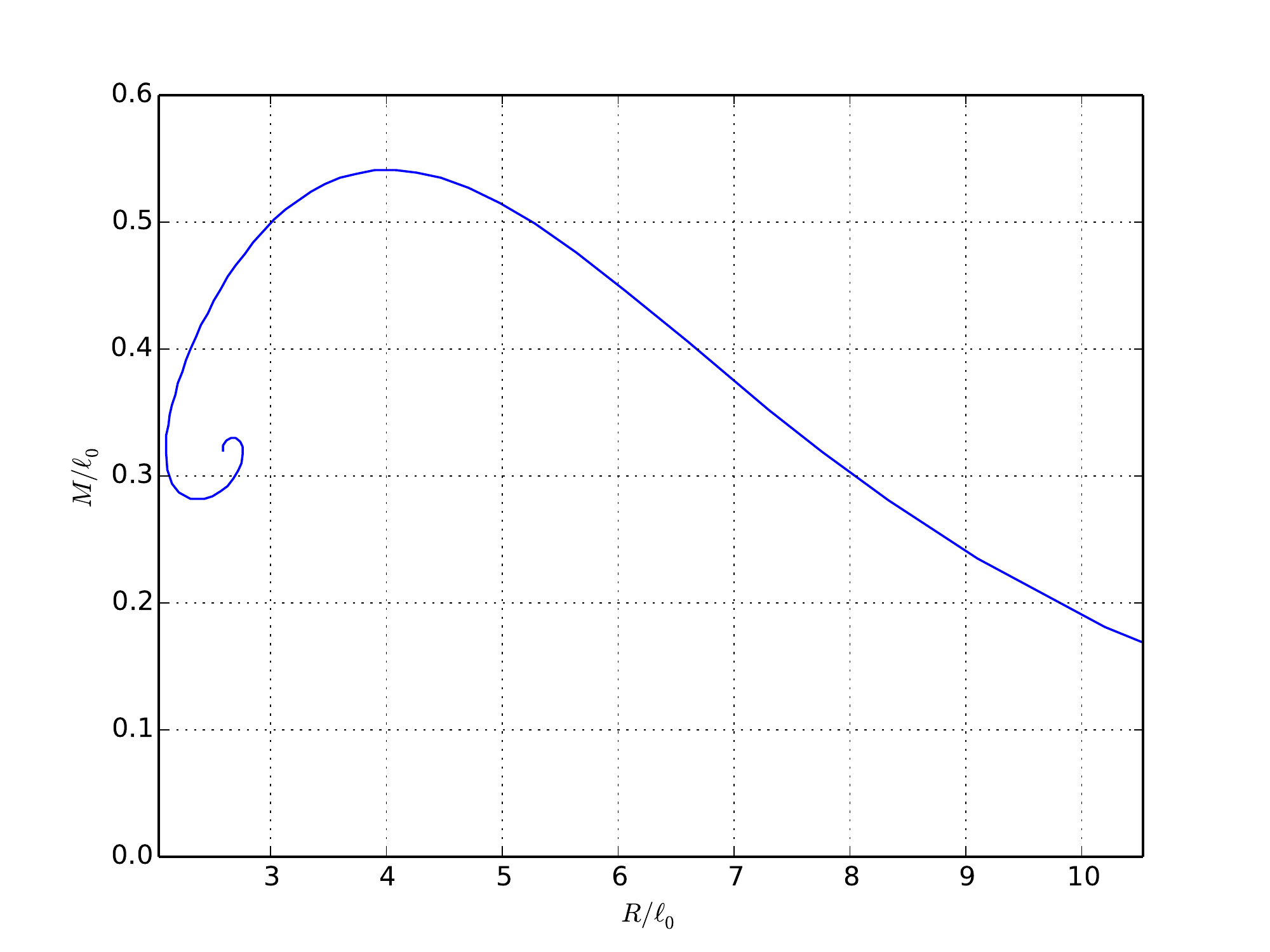}}
\caption{The total mass $M/\ell_0$ vs. radius $R/\ell_0$ for different values of $\lambda$.}
\label{Fig:MvsR}
\end{figure}

\begin{figure}[H]
\centerline{\includegraphics[width=9cm]{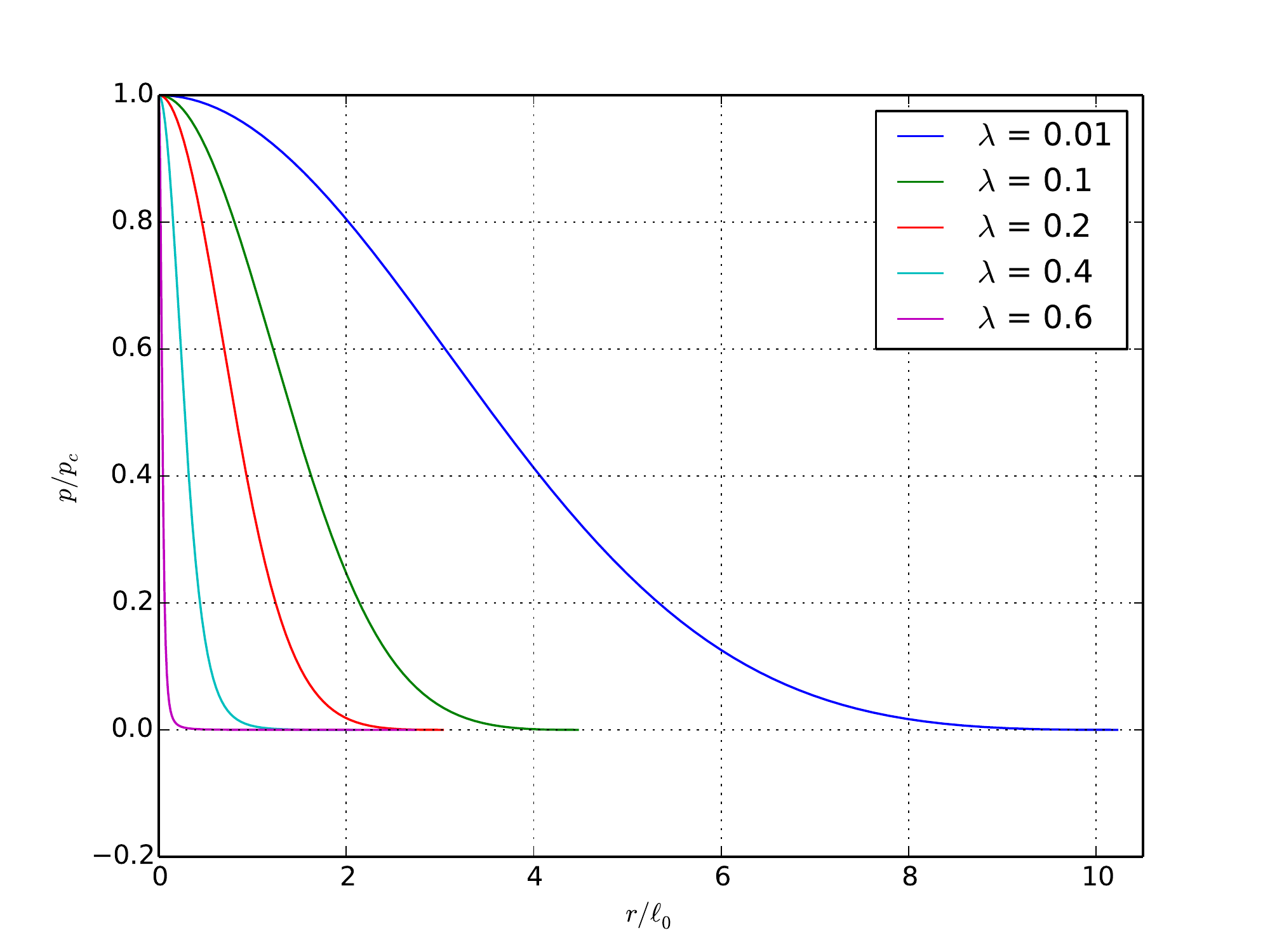}
\includegraphics[width=9cm]{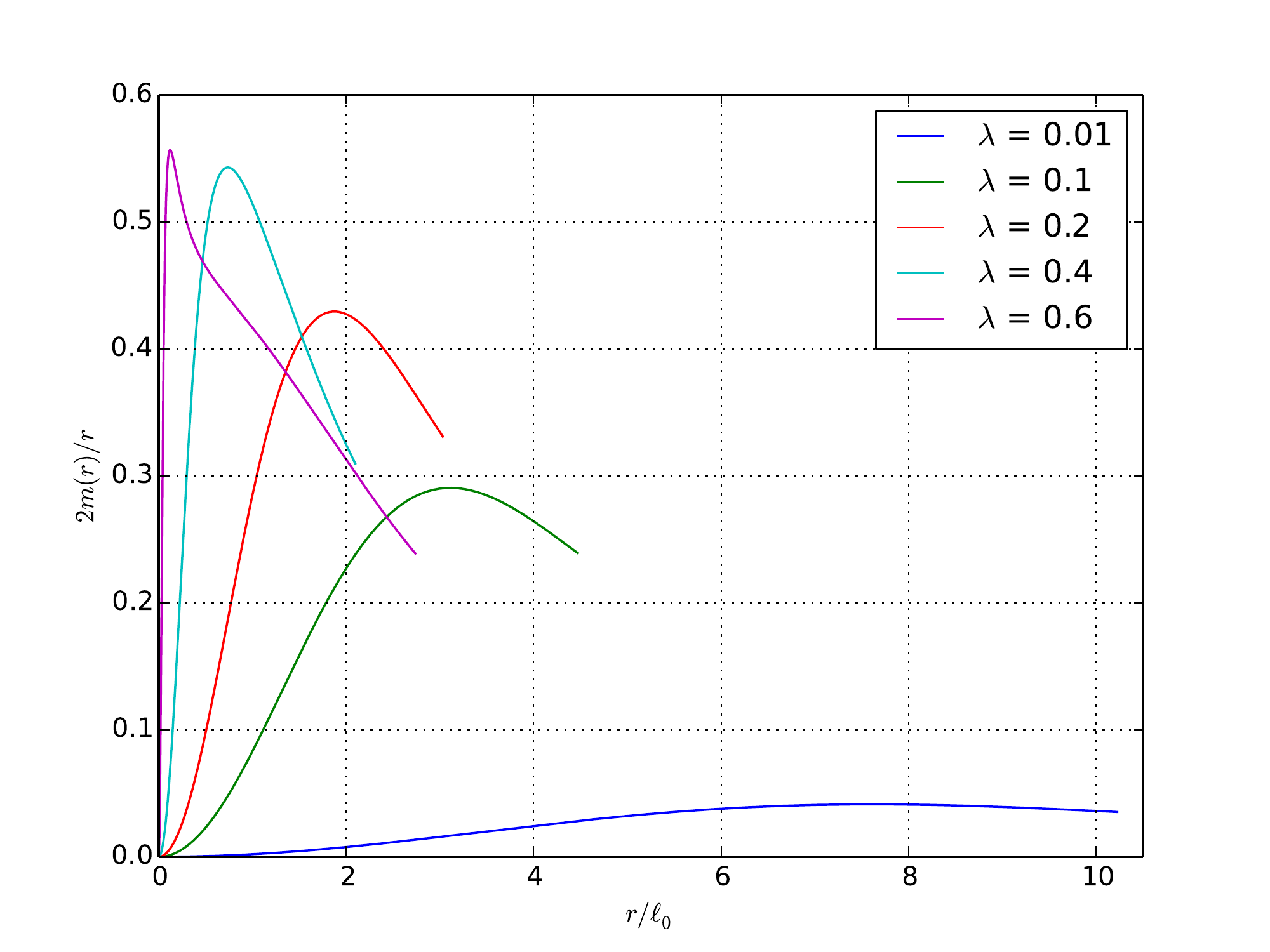}
}
\caption{Plots of the dimensionless pressure $p/p_c = P$ (left panel) and the local compactness ratio $2m(r)/r = 2\lambda x^2\omega(x)$ (right panel) as a function of the dimensionless radius $r/\ell_0 = x\ell/\ell_0$ for different values of $\lambda$. As is visible from these plots the stars become more compact as $\lambda$ increases, with the maximum of the local compactness ratio lying inside the star.}
\label{Fig:Profiles}
\end{figure}

%%%%%%%%%%%%%%%%%%%%%%%%%%%%%%%%%%%%%%%%%
\section{Summary and conclusions}
\label{Sec:Conclusions}
%%%%%%%%%%%%%%%%%%%%%%%%%%%%%%%%%%%%%%%%%

In this article, we have given a systematic derivation of the TOV equation, starting from the most general static and spherically symmetric ansatz for the metric and fluid fields which allows one to reduce the Euler-Einstein system to a set of ordinary differential equations. Under the assumptions on the equation of state discussed in section~\ref{sec:EquationState} and the additional assumption that the effective adiabatic index $\gamma(n)$ (defined in Eq.~(\ref{Eq:gamma(n)})) satisfies the bound $\gamma(n) \geq 4/3 + \varepsilon$ (with $\epsilon > 0$) for small enough values of the particle density $n$, we have provided a rigorous proof for the existence and uniqueness of global solutions of the TOV equations describing a static, spherical star of finite radius and mass. Furthermore, we have shown that the familiar Buchdahl bound $2m(r)/r < 8/9$ holds for any radius $r > 0$ (smaller than or equal to the radius of the surface of the star).

In particular, the results presented in this article apply to any perfect fluid with positive baryonic rest mass and a polytropic equation of state $p(n) = K n^\gamma$ with adiabatic index $\gamma > 4/3$. This includes the equation of state describing an ideal nonrelativistic monoatomic gas, for which $\gamma = 5/3$. Interestingly, the ultrarelativistic counterpart, for which $\gamma = 4/3$, is not included in our analysis. However, as discussed in detail in appendix~\ref{App:StatFis}, an ideal, relativistic monoatomic gas has an equation of state whose effective adiabatic index $\gamma(n)$ interpolates between the two values $4/3$ and $5/3$ in the limits $n\to \infty$ and $n\to 0$, respectively. Since our assumption on $\gamma(n)$ is only needed for small values of $n$ (and not in the ultrarelativistic limit $n\to \infty$), our results fully cover the case of the ideal relativistic monoatomic gas. It is only near the surface of the star (where $n$ is small and thus the gas is practically Newtonian) that the assumption $\gamma(n) \geq 4/3 + \varepsilon$ is required.

For a given equation of state fulfilling our assumptions, the quantitative properties of the star, like its radius, mass, density profile etc. can be obtained from numerical calculations. We have provided an example in section~\ref{Sec:Numerical} for a polytropic equation of state with adiabatic index $\gamma = 5/3$, although the method described in that section can be adapted to more general equations of state in a straightforward way. The most important feature found from the numerical calculations is the spiral-type behavior (see Fig.~\ref{Fig:MvsR}) in the mass-versus-radius relation for the resulting family of static, spherical stars and the existence of a maximum mass configuration in this family, which is important because it indicates a change in behavior for the stability of the star (see chapter 6 in Ref.~\cite{Shapiro}). Further numerical examples based on a dynamical system approach can be found in Ref.~\cite{Heinzle_2003}. For numerical time evolutions of (numerically perturbed) TOV stars, see for instance~\cite{fGfLmM12}.

Our proof for the global existence of stars with finite radius was mostly inspired by the work by Ramming and Rein~\cite{Ramming_2013} and the proof for the Buchdahl bound is a straightforward generalization of the arguments presented in section 6.2 in Ref.~\cite{Wald}. Although the results presented in this article are not new and have been widely studied in the literature, they are scattered in different articles and books. Therefore, we hope that our self-contained review regarding the most important results of the TOV equation and its solutions may serve as a useful pedagogical introduction to the topic and motivate research on  more realistic star models including rotation and magnetic fields, for which rigorous mathematical results are still scarce.

%%%%%%%%%%%%%%%%%%%%%%%%%%%%%%%%%%%%%%%%%

\acknowledgments

It is a pleasure to thank Emilio Tejeda and Thomas Zannias for useful comments on a previous version of this article and an anonymous referee for pointing out to us the relevant references concerning realistic equations of state for neutron stars. ECN was supported by the CONACYT project ``Ayudante de investigador"  No.~17840 and by a postgraduate CONACYT fellowship. OS was partially supported by a CIC grant to Universidad Michoacana de San Nicol\'as de Hidalgo.

\appendix
%%%%%%%%%%%%%%%%%%%%%%%%%%%%%%%%%%%%%%%%%
\section{Computation of the curvature and Einstein tensors}
\label{App:Curvature}
%%%%%%%%%%%%%%%%%%%%%%%%%%%%%%%%%%%%%%%%%

In this appendix, for completeness, we present details regarding the computation of the Riemann curvature, Ricci and Einstein tensors associated with an arbitrary, spherically symmetric metric of the form~(\ref{Eq:SphMetric}). The following presentation and notation follows the work in~\cite{eCnOoS13}. We assume a metric of the form
\begin{equation}
\label{Eq:metric}
g = \tilde{g} + r^2\hat{g},
\end{equation}
with $\tilde{g} = \tilde{g}_{ab} dx^a dx^b$ a two-dimensional Lorentzian metric and $\hat{g} = \hat{g}_{AB} dx^A dx^B =  d\vartheta^2 + \sin^2\vartheta d\varphi^2$ the standard metric on the two-sphere, and $r$ the radius function. For the static metric~(\ref{Eq:MetricAnsatz}) considered in the body of this article, the two-dimensional metric $\tilde{g}$ is of the form $\tilde{g} = -e^{2\Phi(r)} dt^2 + e^{2\Psi(r)} dr^2$ (see Eq.~(\ref{Eq:TwoMetric})) and its components only depend on the radius coordinate $r$. However, for the following calculations, nothing is lost by assuming a generic two-metric $\tilde{g}$ which depends on arbitrary coordinates $(x^a) = (x^0,x^1)$ and to consider the radius $r = r(x^0,x^1)$ to be a positive function of these coordinates. Such a generalization is useful, for instance, when considering time-dependent (non-static) spherically symmetric spacetimes or when discussing more general spacetimes in which $r$ cannot be used as a global coordinate (such as occurs in wormhole spacetimes, for instance).

Using the definition~(\ref{Eq:Christoffel}) for the Christoffel symbols, one finds
\begin{align}
\label{Eq:Chris1}
\Gamma^d{}_{ab} & = \tilde{\Gamma}^d{}_{ab}, \\
\Gamma^d{}_{aB} & = 0, \\
\label{Eq:Chris3}
\Gamma^D{}_{ab} & = 0, \\
\Gamma^d{}_{AB} & = -rr^d\hat{g}_{AB}, \\
\Gamma^D{}_{AB} & = \hat{\Gamma}^D{}_{AB}, \\
\Gamma^D{}_{aB} & = \frac{r_a}{r}\delta^D{}_B, 
\end{align}
where  $\tilde{\Gamma}^d{}_{ab}$ y $\hat{\Gamma}^D{}_{AB}$ are the Christoffel symbols associated with $\tilde{g}_{ab}$ and $\hat{g}_{AB}$, respectively, and where we recall that $a,b=0,1$ and $A,B = 2,3$ refer to the coordinates on the unit sphere. Also we introduced the notations $r_a := \partial_a r$ and $r^d := \tilde{g}^{da}r_a$. Now using these expressions and the formula~(\ref{Eq:Riemann}) for the Riemann curvature tensor, we obtain
\begin{align}
R^c{}_{dab} & = \tilde{R}^c{}_{dab}, \\
R^c{}_{Dab} & = 0, \\
R^C{}_{Dab} & = 0, \\
R^c{}_{DAB} & = 0, \\
R^c{}_{DaB} & = -r(\tilde{\nabla}^c\tilde{\nabla}_a r)\hat{g}_{BD}, \\
R^C{}_{DAB} & =  \hat{R}^C{}_{DAB} - r^er_e(\delta^C{}_A\hat{g}_{BD} - \delta^C{}_B\hat{g}_{AD}),
\end{align} 
where $\tilde{R}^c{}_{dab}$ and $\hat{R}^C{}_{DAB}$ refer to the components of the Riemann tensor associated with the metrics $\tilde{g}$ and $\hat{g}$ respectively. In two dimensions the curvature tensor has the following form (see, for instance exercise 4, chapter 3 in~\cite{Wald})
\begin{align}
\tilde{R}^c{}_{dab} & = \tilde{\kappa}(\delta^c{}_ {a} \tilde{g}_{bd} - \delta^c{}_{b} \tilde{g}_{ad}), \\
\hat{R}^C{}_{DAB} & = \hat{\kappa}(\delta^C{}_ {A} \hat{g}_{BD} - \delta^C{}_{B} \hat{g}_{AD}), 
\end{align}
where $\tilde{\kappa}$ and $\hat{\kappa}$ are the Gaussian curvatures associated with the metric $\tilde{g}$ and $\hat{g}$ respectively. Therefore,
\begin{align}
R^c{}_{dab} & = \tilde{\kappa}(\delta^c{}_ {a} \tilde{g}_{bd} - \delta^c{}_{b} \tilde{g}_{ad}), \\
R^c{}_{DaB} & = -r(\tilde{\nabla}^c\tilde{\nabla}_a r)\hat{g}_{BD}, \\
R^C{}_{DAB} & = (1 - r^e r_e)(\delta^C{}_ {A} \hat{g}_{BD} - \delta^C{}_{B} \hat{g}_{AD}).
\end{align}
With these expressions we can calculate the components of the Ricci tensor
\begin{align}
R_{ab} & = R^e{}_{aeb} + R^E{}_{aEb} = \tilde{\kappa}\tilde{g}_{ab} - \frac{2}{r}\tilde{\nabla}_a\tilde{\nabla}_b r, \\
R_{aB} & = R^e{}_{aeB} + R^E{}_{aEB} = 0, \\
R_{AB} & = (1 - r^er_e - r\tilde{\Delta}r)\hat{g}_{AB},
\end{align}
where $\tilde{\Delta}r = \tilde{\nabla}^b\tilde{\nabla}_b r = \tilde{g}^{ab}\tilde{\nabla}_a\tilde{\nabla}_b r$ is the covariant Laplacian of $r$. The Ricci scalar is given by 
\begin{equation}
R = R^a{}_{a} + R^A{}_{A} = 2\tilde{\kappa} + \frac{2}{r^2}(1 - r^er_e - 2r\tilde{\Delta}r).
\end{equation}
Finally, the components of the Einstein tensor are given by the following expressions 
\begin{align}
\label{Eq:TensorE11}
G^{a}{}_b & = - \frac{2}{r}\tilde{\nabla}^{a}{}\tilde{\nabla}_b r - \frac{1}{r^2}(1 - r^er_e - 2r\tilde{\Delta}r)\delta^{a}{}_{b}, \\
G^{a}{}_{B} & = 0, \\
\label{Eq:TensorE21}
G^{A}{}_{B} & = \left(\frac{\tilde{\Delta}r}{r} - \tilde{\kappa}\right)\delta^{A}{}_{B}.
\end{align}
Specializing to the case of the static two-metric~(\ref{Eq:TwoMetric}), one obtains from this the Christoffel symbols listed in Eqs.~(\ref{Eq:Christoffel1}--\ref{Eq:Christoffel4}) and the components of the Einstein tensor in Eqs.~(\ref{Eq:TE1},\ref{Eq:TE2},\ref{Eq:TE3}).

%%%%%%%%%%%%%%%%%%%%%%%%%%%%%%%%%%%%%%%%%
\section{Equation of state for a monoatomic, relativistic ideal gas}
\label{App:StatFis}
%%%%%%%%%%%%%%%%%%%%%%%%%%%%%%%%%%%%%%%%%

In this appendix we offer a derivation for the equation of state describing a classical (i.e. non-quantum) monoatomic, ideal gas, and towards the end of this appendix we also make some comments regarding the complete degenerate, ideal Fermi gas. To this purpose, we consider a fixed box of volume $V$ containing a large number $N$ of particles, but still assume that $V$ is small enough such that the metric is well-described (in a local inertial frame) by the Minkowski metric inside $V$, such that a special relativistic treatment inside $V$ is sufficient. We consider a system in which the temperature $T$ could be arbitrarily high, such that a significant fraction of the particles could have relativistic speeds, and thus we use the special relativistic Hamiltonian
\begin{equation}
H(x,p) = c\sum_{j = 1}^{N}\sqrt{|\vec{p}_j|^2 + m^2 c^2},
\end{equation}
with $p = (\vec{p}_1,\vec{p}_2,\ldots,\vec{p}_N)\in \Real^{3N}$ the momenta and $m$ the mass of the particles, to describe the system of $N$ particles. Based on these assumptions, we compute the thermodynamics of the gas using the canonical ensemble. The corresponding partition function is
\begin{equation}
Z(T, V, N) = \frac{1}{N! h^{3N}} \int e^{-\beta H(x,p)} d^{3N} x\, d^{3N} p,
\end{equation}
where $h$ is Planck's constant and $\beta = 1/(k_B T)$, $k_B$ denoting Boltzmann's constant. Since the gas is non-interacting, the partition function factorizes:
\begin{equation}
Z(T,V,N) = \frac{1}{N!} Z_1(T,V)^N,
\end{equation}
with
\begin{equation}
Z_1(T,V) = \frac{V}{h^3}\int e^{-c\beta\sqrt{|\vec{p}|^2 + m^2 c^2}} d^3 p.
\end{equation}
The integral can be computed using spherical coordinates, such that
\begin{equation}
Z_1(T,V) = \frac{4\pi V}{h^3}\int_0^\infty e^{-c\beta\sqrt{p^2 + m^2 c^2}} p^2 dp.
\end{equation}
Subsequently, one performs the variable substitution $p = mc\sinh\chi$ which yields
\begin{equation}
Z_1(T,V) = \frac{4\pi V}{\lambda^3}\int_0^\infty e^{-z\cosh\chi} \sinh^2\chi\cosh\chi d\chi,
\label{Eq:Z1}
\end{equation}
where we have introduced the Compton wavelength
\begin{equation}
\lambda := \frac{h}{m c}
\end{equation}
of the particles, as well as the dimensionless quantity
\begin{equation}
z := \beta m c^2 = \frac{m c^2}{k_B T},
\end{equation}
which is the ratio between the rest mass and thermal energy of the particles. Rewriting $\sinh^2\chi\cosh\chi = \frac{1}{3}\frac{d}{d\chi} \sinh^3\chi$ in Eq.~(\ref{Eq:Z1}) and using integration by parts leads to the final expression for the partition function:
\begin{equation}
Z(T, V, N) = \frac{1}{N!}\left[ \frac{4\pi V}{\lambda^3}\frac{K_2(z)}{z} \right]^N,
\end{equation}
where $K_2(z)$ denotes the modified Bessel function of the second kind of order $2$, see Ref.~\cite{DLMF} and Appendix~\ref{App:ModifiedBessel} for further details and its definition.

Using Stirling's approximation $\log N! = N\log N - N + {\cal O}(\log N)$, the free energy of the system is found to be
\begin{align*}
F(T,V,N) & = -k_B T\log Z(T, V, N)\\
& = -N k_B T\left\{ 1 + \log\left[ \frac{4\pi V}{\lambda^3 N}\frac{K_2(z)}{z} \right]
+ {\cal O}\left( \frac{\log N}{N} \right) \right\},
\end{align*}
from which one can easily compute the relevant thermodynamic quantities like pressure, entropy and internal energy using the well-known formulae (see, for instance~\cite{Huang}) 
\begin{equation}
p = -\left(\frac{\partial F}{\partial V}\right)_{T, N},\qquad
S = -\left(\frac{\partial F}{\partial T}\right)_{V, N},\qquad
U = F + T S.
\end{equation}
Dividing $S$ and $U$ by $V$ and taking the thermodynamic limit $N\to \infty$ holding the particle density $n := N/V$ constant, one obtains from this the following expressions for pressure, entropy density and energy density as functions of $(n,T)$:
\begin{eqnarray}
p(n,T) &=& n k_B T,
\label{Eq:FisStatp}\\
s(n,T) &=& n k_B \left\{ 4 + \log\left[ \frac{4\pi}{\lambda^3 n}\frac{K_2(z)}{z} \right]
 + z \frac{K_1(z)}{K_2(z)} \right\},
\label{Eq:FisStats}\\
\varepsilon(n,T) &=& n k_B T\left[ 3 +  z \frac{K_1(z)}{K_2(z)} \right].
\label{Eq:FisState}
\end{eqnarray}
In deriving these equations we have used the relation~(\ref{Eq:BesselRecurrence2}) to eliminate the derivative of $K_2$. Eq.~(\ref{Eq:FisStatp}) is the ideal gas equation, while from Eq.~(\ref{Eq:FisState}) we see that $\varepsilon(n,T)/n$ is a function of $T$ only which converges to the rest mass energy of the particles, $m c^2$, in the limit $T\to 0$ (see Eqs.~(\ref{Eq:K1Expansion},\ref{Eq:K2Expansion})). By construction, the first law~(\ref{Eq:FirstLaw}) is satisfied.

For an isentropic configuration, for which the specific entropy $s/n$ is constant, the second equation yields the following relation between $n$ and $T$:
\begin{equation}
n(T) = n_0\frac{K_2(z)}{z} e^{z\frac{K_1(z)}{K_2(z)}},\qquad
z := \frac{mc^2}{k_B T},
\label{Eq:nIsentropic}
\end{equation}
with $n_0$ a constant. The next lemma shows that this defines a smooth, strictly monotonously increasing function $n: (0,\infty) \to (0,\infty)$ which can hence be inverted to yield $T$ as a function of $n$. This allows one to eliminate the temperature in the expressions~(\ref{Eq:FisStatp},\ref{Eq:FisState}) and describe pressure and energy density as function of $n$ only. The formulae~(\ref{Eq:FisStatp},\ref{Eq:FisStats},\ref{Eq:FisState},\ref{Eq:nIsentropic}) were already derived over 100 years ago by F.~J\"uttner~\cite{fJ11}.

\begin{Lem}
The function $F: (0,\infty)\to (0,\infty)$  defined by
\begin{equation}
F(z) := \frac{K_2(z)}{z} e^{z\frac{K_1(z)}{K_2(z)}},\qquad z > 0,
\end{equation}
such that $n(T) = n_0 F(z)$, is smooth and satisfies $F'(z) < 0$ for all $z > 0$,  $z^3 F(z)\to 2$ for $z\to 0$ and $z^{3/2} F(z)\to \sqrt{\pi/(2e^3)}$ in the limit $z\to \infty$.
\end{Lem}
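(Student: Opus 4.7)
The smoothness of $F$ on $(0,\infty)$ is immediate from the smoothness and strict positivity of $K_1$ and $K_2$ on $(0,\infty)$, so the two limit statements will follow by direct substitution of the asymptotic expansions collected in Appendix~\ref{App:ModifiedBessel}. For $z\to 0^+$, using $K_\nu(z)\sim\tfrac{1}{2}\Gamma(\nu)(2/z)^\nu$ for $\nu>0$, one reads off $R:=K_1/K_2\sim z/2$, hence $u:=zR\to 0$ and $e^{u}\to 1$, while $K_2(z)/z\sim 2/z^3$, giving $z^3F(z)\to 2$. For $z\to\infty$, the standard Hankel expansion $K_\nu(z)\sim\sqrt{\pi/(2z)}\,e^{-z}[1+(4\nu^2-1)/(8z)+O(z^{-2})]$ gives $R=1-3/(2z)+O(z^{-2})$, so $u=z-3/2+O(z^{-1})$, $e^u=e^{-3/2}e^z(1+O(z^{-1}))$, while $K_2(z)/z\sim\sqrt{\pi/(2z^3)}\,e^{-z}$; multiplying and taking $z^{3/2}$ yields the claimed limit $\sqrt{\pi/(2e^3)}$.

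The nontrivial part is $F'(z)<0$. I would first reduce it to a single scalar inequality: using $K_2'/K_2=-R-2/z$ together with the Riccati equation $R'=R^2+3R/z-1$ (both derivable from the Bessel recurrences and the identity $K_0=K_2-(2/z)K_1$), one finds
\begin{equation*}
(\log F)'(z) \;=\; \frac{K_2'}{K_2}-\frac{1}{z}+R+zR' \;=\; -\frac{3}{z}+zR' \;=\; \frac{u(z)^2+3u(z)-3-z^2}{z}, \qquad u(z):=zK_1(z)/K_2(z),
\end{equation*}
so everything reduces to showing $z^2>u(z)^2+3u(z)-3$ for every $z>0$. Rather than attack this by Bessel estimates, I would identify it with a statistical-mechanics positivity. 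A short calculation using the Riccati ODE $zu'=u^2+4u-z^2$ and the formula $\varepsilon/n=k_BT(3+u)$ shows that the specific heat per particle at constant $n$ is
\begin{equation*}
c_V^{(1)} \;=\; \frac{\partial(\varepsilon/n)}{\partial T}\bigg|_n \;=\; k_B\bigl[z^2-u^2-3u+3\bigr],
\end{equation*}
so the desired inequality is exactly the statement $c_V^{(1)}>0$. On the other hand, the canonical-ensemble identity $C_V=\mathrm{Var}(E)/(k_BT^2)$, applied to a single non-interacting particle, gives $c_V^{(1)}=\mathrm{Var}(e_1)/(k_BT^2)$; since $e_1=c\sqrt{|\vec p|^2+m^2c^2}$ is a non-constant function of $\vec p$ and $\vec p$ is distributed according to a continuous, strictly positive density on $\Real^3$, $\mathrm{Var}(e_1)>0$, which closes the argument.

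The main obstacle is the monotonicity step. A purely analytic approach via Turán-type inequalities (for instance $K_1^2\le K_0K_2$, which gives $zR^2\le z-2R$ and thus the upper bound $(\log F)'\le R-3/z$) only settles the claim in a bounded range of $z$, because the bound deteriorates to $(R-3/z)\to 1$ as $R\to 1$, $z\to\infty$; closing the large-$z$ gap by elementary Bessel ratio bounds appears delicate. The identification $c_V^{(1)}=k_B(z^2-u^2-3u+3)$ combined with the variance formula sidesteps this entirely and makes transparent why the inequality holds: it is the thermal stability of the Jüttner gas. Once $F'<0$ is established, strict positivity $F:(0,\infty)\to(0,\infty)$ is immediate from $F(z)=(K_2(z)/z)e^{u(z)}>0$.
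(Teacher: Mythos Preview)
Your argument is correct, and the reduction to $(\log F)'=(u^2+3u-3-z^2)/z$ with $u=zK_1/K_2$ is precisely the quantity the paper calls $-G(z)/z$; so both proofs agree that the monotonicity statement is equivalent to $G(z)=z^2-u^2-3u+3>0$. The difference lies in how this scalar inequality is closed. The paper proceeds purely analytically: for small $z$ it uses the elementary bound $K_1/K_2\le z/2$ (from $K_0>0$ and the recurrence) to get $G(z)\ge\tfrac14[13-(z^2+1)^2]$, positive for $z^2<\sqrt{13}-1$; for large $z$ it uses the Hankel expansion with explicit remainder control to obtain $K_1/K_2\le 1-3/(2z)+15/(8z^2)$ and hence $G(z)\ge\tfrac32(1-75/(32z^2))$, positive for $z^2>75/32$; the two regions overlap, and one is done. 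Your route instead recognizes $k_B\,G(z)$ as the single-particle heat capacity $\partial(\varepsilon/n)/\partial T$ of the J\"uttner gas and invokes the canonical fluctuation identity $c_V^{(1)}=\mathrm{Var}(e_1)/(k_BT^2)>0$, which is strict because $e_1(\vec p)=c\sqrt{|\vec p|^2+m^2c^2}$ is non-constant under a strictly positive continuous density. This is a genuinely different and more conceptual argument: it avoids the case split and the remainder estimates entirely, and it explains \emph{why} the inequality holds (thermal stability) rather than verifying it by hand. The paper's approach, in turn, is self-contained within the Bessel-function framework of Appendix~\ref{App:ModifiedBessel} and does not appeal to any statistical-mechanics input beyond the definition of $F$. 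Your treatment of the two asymptotic limits matches the paper's.
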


\begin{proof}
Differentiating the function $F$ and using the relations~(\ref{Eq:BesselRecurrence2}) yields
\begin{equation}
F'(z) = -G(z) \frac{K_2(z)}{z^2} e^{z\frac{K_1(z)}{K_2(z)}},\qquad
G(z) := z^2 + 3 - 3z\frac{K_1(z)}{K_2(z)} - z^2\frac{K_1(z)^2}{K_2(z)^2},
\label{Eq:GDef}
\end{equation}
and thus proving $F' < 0$ is equivalent to showing that $G(z) > 0$ for all $z > 0$. This in turn requires an upper bound for $K_1/K_2$. We first analyze the situation for small values of $z > 0$. In this case, one can use the estimate
\begin{equation}
\frac{K_1(z)}{K_2(z)} \leq z/2,\qquad z > 0,
\label{Eq:K1K2Est1}
\end{equation}
which follows from the recurrence relation~(\ref{Eq:BesselRecurrence1}) with $n=1$ and the fact that $K_0 > 0$. Using this into the definition of $G$ in Eq.~(\ref{Eq:GDef}) yields
\begin{equation}
G(z) \geq -\frac{z^2}{2} - \frac{z^4}{4} + 3 = \frac{1}{4}\left[ 13 - (z^2 + 1)^2 \right],
\end{equation}
which proves the $G(z) > 0$ for all $z^2 < \sqrt{13} - 1 \approx 2.6$.

To prove that $G$ is positive for larger values of $z$, we use instead the expansions~(\ref{Eq:K1Expansion},\ref{Eq:K2Expansion}) obtaining
\begin{equation}
\frac{K_1(z)}{K_2(z)} = 1  - \frac{3}{2z} + \frac{15}{8z^2}
\frac{1 + \frac{21}{32z} + \frac{8z^2}{15}[ r_{1,2}(z) - r_{2,2}(z)] + \frac{4z}{5} r_{2,2}(z)}
{1 + \frac{15}{8z} + \frac{105}{128z^2} + r_{2,2}(z)}.
\end{equation}
Using the estimates for the remainder terms $r_{1,2}$ and $r_{2,2}$ below Eqs.~(\ref{Eq:K1Expansion},\ref{Eq:K2Expansion}) yields the alternative estimate
\begin{equation}
\frac{K_1(z)}{K_2(z)} \leq 1  - \frac{3}{2z} + \frac{15}{8z^2},\qquad z > 0.
\label{Eq:K1K2Est2}
\end{equation}
which is better than~(\ref{Eq:K1K2Est1}) for large values of $z$. Combining this estimate with the definition of $G$ in Eq.~(\ref{Eq:GDef}) gives
\begin{equation}
G(z) \geq \frac{3}{2}\left( 1 - \frac{75}{32z^2} \right),
\end{equation}
which is positive for all $z^2 > 75/32 = 2.34375$. This proves that $G$ is positive and hence that $F'(z) < 0$ for all $z > 0$.

The claimed asymptotic behavior for $z\to 0$ and $z\to \infty$ follow easily from Eqs.~(\ref{Eq:ModifiedBesselExpansion},\ref{Eq:KnZeroExpansion}).
\end{proof}

It follows from the previous lemma that in the low temperature limit $z\to \infty$ (the symbol $\sim$ indicating proportionality)
\begin{equation}
n(T)\sim z^{-3/2} \sim T^{3/2},
\end{equation}
and thus $p\sim n^{5/3}$, whereas in the high temperature limit $z\to 0$ (i.e.  $k_B T\gg m c^2$),
\begin{equation}
n(T)\sim z^{-3} \sim T^3
\end{equation}
such that $p\sim n^{4/3}$. In particular, the assumptions $(i)$--$(iv)$ regarding the equation of state in section~\ref{sec:EquationState} are fulfilled and the effective adiabatic index defined in Eq.~(\ref{Eq:gamma(n)}) yields
\begin{equation}
\gamma(n) = 1 + \frac{1}{G(z)},
\label{Eq:IndiceAdiabatico}
\end{equation}
with $G$ defined in Eq.~(\ref{Eq:GDef}). From the asymptotic properties in the low temperature limit it follows that $\varepsilon/n \to m c^2 > 0$ and $\gamma(n) \to 5/3$ for $n\to 0$, and thus assumption $(iii)$ is satisfied for any $4/3 < \gamma_1 < 5/3$, which is sufficient to guarantee the existence of finite radius stars. A plot of the function $\gamma(n)$ is shown in Fig.~\ref{Fig:IndiceAdiabatico}.

\begin{figure}[ht]
\centerline{\includegraphics[width=11cm]{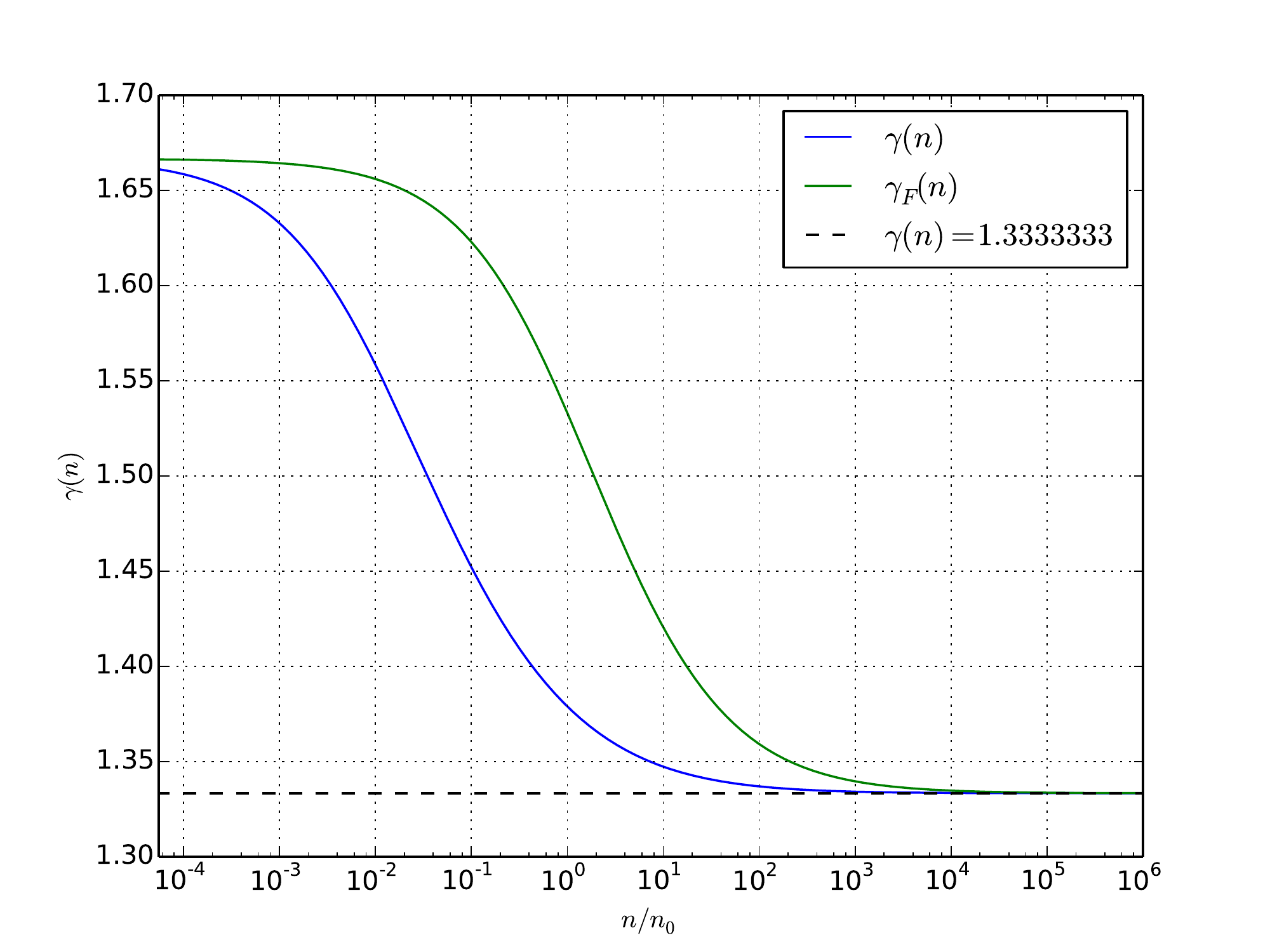}}
\caption{The effective adiabatic index as a function of $n$ for an ideal, relativistic monoatomic gas (blue curve) and for a completely degenerate Fermi gas (green curve). For the Fermi gas, we have defined $n_0 := (3\pi^2\lambda^3)^{-1}$.}
\label{Fig:IndiceAdiabatico}
\end{figure}

We end this appendix with a few remarks regarding the quantum analogue of the description we have given so far, assuming that the particles are fermions. For such a gas, the results we have discussed so far are only valid for high temperatures or low densities, such that $\lambda_T^3 n\ll 1$, with $\lambda_T = h/\sqrt{2\pi m k_B T}$ the thermal wavelength. For low temperatures and high densities, quantum mechanical effects need to be taken into account. This can be easily understood by noticing that the classical expressions for the pressure and energy density (see Eqs.~(\ref{Eq:FisStatp}) and (\ref{Eq:FisState})) converge to zero as $T\to 0$, while for a gas consisting of fermions these quantities cannot be zero due to Pauli's exclusion principle. A consistent generalization of the expressions~(\ref{Eq:FisStatp}--\ref{Eq:FisState}) for an ideal fermion gas should be based on quantum statistics (see for instance chapter 8 in~\cite{Huang} for the case of non-relativistic particles). Here, we only give the results for a completely degenerate Fermi gas, that is, a gas of fermions at zero temperature:
\begin{eqnarray}
n_F(x) &=& \frac{x^3}{3\pi^2\lambda^3},
\label{Eq:DegenFermin}\\
p_F(x) &=& \frac{m c^2}{8\pi^2\lambda^3}\left[ x\sqrt{1 + x^2}\left( \frac{2x^2}{3} - 1 \right)
 + \log\left( x + \sqrt{1 + x^2} \right) \right],
\label{Eq:DegenFermip}
\end{eqnarray}
where $x = \lambda k_F$ is the dimensionless Fermi momentum (see for instance chapter~2 in Ref.~\cite{Shapiro}). Eliminating $x$ from these expressions one obtains $p_F$ as a function of $n_F$, and the expression for $\varepsilon_F$ can be obtained by integrating the first law~(\ref{Eq:FirstLaw}) with $T = 0$ and setting $e_0 = m c^2$ in Eq.~(\ref{Eq:epsilonp}). Interestingly, the effective adiabatic index $\gamma_F(n)$ has the same qualitative properties as the one of the classical isentropic gas (see Eq.~(\ref{Eq:IndiceAdiabatico})), and interpolates between the values $5/3$ and $4/3$ as $n$ increases from $0$ to infinity. Its behaviour is also shown in Fig.~\ref{Fig:IndiceAdiabatico}. In particular, the assumptions $(i)$--$(iv)$ regarding the equation of state in section~\ref{sec:EquationState} are fulfilled.

The expressions~(\ref{Eq:DegenFermin},\ref{Eq:DegenFermip}) are relevant for the description of isolated white dwarfs and neutron stars, since these objects ultimately cool down to zero temperature.

%%%%%%%%%%%%%%%%%%%%%%%%%%%%%%%%%%%%%%%%%
\section{Definition and main properties of the modified Bessel functions of the second kind}
\label{App:ModifiedBessel}
%%%%%%%%%%%%%%%%%%%%%%%%%%%%%%%%%%%%%%%%%

In this appendix we briefly review the definition of the modified Bessel functions of the second kind $K_n$, $n=0,1,2,\ldots$, and some of their properties that are used in the previous appendix. We start with the following integral representation (see~\cite{DLMF}):
\begin{equation}
K_n(z) := %\int_0^\infty e^{-z\cosh\chi} \cosh(n\chi) d\chi
\frac{z^n}{(2n-1)!!} \int_0^\infty e^{-z\cosh\chi} \sinh^{2n}\chi d\chi,\qquad z > 0,
\label{Eq:ModifiedBessel}
\end{equation}
where $(2n-1)!! = (2n-1)(2n-3)\cdots 3\cdot 1$. Using integration by parts and the identity $\cosh^2\chi - \sinh^2\chi = 1$, it is not difficult to prove the following recurrence relations:
\begin{eqnarray}
K_{n+1}(z) &=& \frac{2n}{z} K_n(z) + K_{n-1}(z),
\label{Eq:BesselRecurrence1}\\
K_n'(z) &=& \frac{n}{z} K_n(z) - K_{n+1}(z) = -\frac{n}{z} K_n(z) - K_{n-1}(z),
\label{Eq:BesselRecurrence2}
\end{eqnarray}
which are valid for all $n=1,2,3,\ldots$ and $z > 0$. Next, we are interested in the asymptotic behavior (with corresponding error estimates) for $z\to \infty$. For this, we first perform the variable substitution $\cosh\chi = 1 + \eta^2/(2z)$ in Eq.~(\ref{Eq:ModifiedBessel}), which yields
\begin{equation}
K_n(z) = \frac{1}{(2n-1)!!} \frac{e^{-z}}{\sqrt{z}} \int_0^\infty e^{-\frac{1}{2}\eta^2}
\eta^{2n}\left( 1 + \frac{\eta^2}{4z} \right)^{n-\frac{1}{2}} d\eta,\qquad z > 0.
\label{Eq:ModifiedBesselBis}
\end{equation}
Next, we use the Taylor expansion of the function $f(x) := (1 + x)^\alpha$ about the point $x = 0$:
\begin{equation}
f(x) = \sum\limits_{k=0}^N {\alpha\choose k} x^k 
 + {\alpha\choose N+1} (1 + \theta x)^{\alpha-N-1} x^{N+1},\qquad x > 0
\end{equation}
with some $0 < \theta < 1$ depending on $x$ and ${\alpha\choose k} = \alpha(\alpha-1)\cdots (\alpha-k+1)/k!$. Applying this to the integrand in Eq.~(\ref{Eq:ModifiedBesselBis}) with $x = \eta^2/(4z)$ and using the Gaussian integral
\begin{equation}
\int_0^\infty e^{-\frac{1}{2} \eta^2} \eta^{2n} d\eta = \sqrt{\frac{\pi}{2}} (2n-1)!!\; ,
\end{equation}
one obtains the following expansion:
\begin{equation}
K_n(z) = \sqrt{\frac{\pi}{2z}} e^{-z}\left[ \sum\limits_{k=0}^N \frac{a_k(n)}{z^k}
 + r_{n,N}(z) \right],\qquad z > 0,
 \label{Eq:ModifiedBesselExpansion}
\end{equation}
with the coefficients $a_0(n) := 1$,
\begin{equation}
a_k(n) = \frac{[4n^2-1][4n^2 - 9]\cdots [4n^2 - (2k-1)^2]}{8^k k!},\qquad k=1,2,3,\ldots,
\end{equation}
and the remainder term
\begin{equation}
r_{n,N}(z) = \frac{a_{N+1}(n)}{z^{N+1}} \frac{1}{(2n+2N+1)!!} \sqrt{\frac{2}{\pi}}
\int_0^\infty e^{-\frac{1}{2}\eta^2} \eta^{2n+2N+2}\left( 1 + \frac{\theta\eta^2}{4z} \right)^{n-N-\frac{3}{2}} d\eta,\qquad z > 0.
\label{Eq:ModifiedBesselRemainder}
\end{equation}
For $N >  n-1$ the exponent in the integrand on the right-hand side is negative and one obtains the estimate
\begin{equation}
0\leq \frac{r_{n,N}(z)}{a_{N+1}(n)} \leq \frac{1}{z^{N+1}},\qquad z > 0.
\label{Eq:ModifiedBesselErrorEstimate}
\end{equation}
Therefore, again provided that $N > n-1$, the remainder term has the same sign as the first neglected term $a_{N+1}(n)/z^{N+1}$ in the expansion~(\ref{Eq:ModifiedBesselExpansion}) and it is bounded by it in absolute value. The examples of relevance for the previous appendix are:
\begin{eqnarray}
K_1(z) &=& \sqrt{\frac{\pi}{2z}} e^{-z}
\left[ 1 + \frac{3}{8}\frac{1}{z} - \frac{15}{128}\frac{1}{z^2} + r_{1,2}(z) \right],
\label{Eq:K1Expansion}\\
K_2(z) &=& \sqrt{\frac{\pi}{2z}} e^{-z}\left[
1 + \frac{15}{8}\frac{1}{z} + \frac{105}{128}\frac{1}{z^2} + r_{2,2}(z) \right],
\label{Eq:K2Expansion}
\end{eqnarray}
with $0\leq z^3 r_{1,2}(z) \leq 105/1024$ and $-315/1024\leq z^3 r_{2,2}(z)\leq 0$.

Finally, we note that by pulling a factor $(4z)^{-n+\frac{1}{2}}$ out of the integral~(\ref{Eq:ModifiedBesselBis}) one can also understand the asymptotic limit of $K_n(z)$ for $z\to 0$. For example, one has
\begin{equation}
\lim\limits_{z\to 0} z^n K_n(z) = 2^{n-1}(n-1)!\, .
\label{Eq:KnZeroExpansion}
\end{equation}

%%%%%%%%%%%%%%%%%%%%%%%%%%%%%%%%%%%%%%%%%
\section{Completeness of the function space $X_R$}
\label{App:Completeness}
%%%%%%%%%%%%%%%%%%%%%%%%%%%%%%%%%%%%%%%%%

In this appendix we demonstrate that the set $X_R := C_b((0,R],\Real)$ of bounded, continuous, real-value functions on the interval $(0,R]$, equipped with the norm $\|\cdot\|_{\infty}$ defined in Eq.~(\ref{Eq:Norm}) forms a Banach space, that is, a complete normed space. For this, we first observe that $X_R$ is a real vector space. Next, we check that $\|\cdot\|_{\infty}$ satisfied the three postulates defining a norm, which are:

\begin{itemize}
\item[$1)$] $\|P\|_\infty \geq 0$ and $\|P\|_\infty = 0$ if and only if $P = 0$,
\item[$2)$] $\|\lambda P\|_\infty = |\lambda| \cdot \|P\|_\infty$ for all $\lambda \in \mathbb{R}$ and $P \in X_R$,
\item[$3)$] $\|P_1 + P_2\|_\infty \leq \|P_1\|_\infty + \|P_2\|_\infty$ for all $P_1, P_2 \in X_R$.
\end{itemize}
To this purpose, notice first that $|P(x)| \geq 0$ for all $x \in (0, R]$, hence it is clear that $\|P\|_\infty = \sup_{0 < x \leq R} |P(x)|\geq 0$ and that $\|P\|_\infty = 0$ if and only if $P(x) = 0$ for all $x \in (0,R]$. Hence, the first condition is satisfied. Next, we have
\begin{equation}
\|\lambda P\|_\infty = \sup_{0 < x \leq R} |\lambda P(x)| 
 = \sup_{0 < x \leq R}|\lambda| \cdotp |P(x)|
 = |\lambda|\sup_{0 < x \leq R} |P(x)| = |\lambda| \cdotp \|P\|_\infty,
\end{equation}
which shows that the second condition is also satisfied. Finally,
\begin{eqnarray}
\|P_1 + P_2\|_\infty &=& \sup_{0 < x \leq R} |P_1(x) + P_2(x)| \leq \sup_{0 < x \leq R} \left(|P_1(x)| + |P_2(x)|\right) 
\nonumber\\
&\leq& \sup_{0 < x \leq R} |P_1(x)| + \sup_{0 < x \leq R} |P_2(x)| = \|P_1\|_\infty + \|P_2\|_\infty,		
\end{eqnarray}
which shows that the third condition is also satisfied and leads to the conclusion that $\|\cdot\|_{\infty}$ defines a norm on $X_R$.

It remains to prove that $(X_R, \|\cdot\|_{\infty})$ is a Banach space. For this we must show that any Cauchy sequence $(P_k)$ converges in $(X_R, \|\cdot\|_{\infty})$, that is there exists a limit point $P\in X_R$ such that $\| P_k - P \|_\infty \to 0$ for $k\to \infty$. Therefore, let $(P_k)$ be a Cauchy sequence in $(X_R, \|\cdot\|_{\infty})$. This means that for any $\varepsilon > 0$ there exists $n \in \mathbb{N}$ such that  
\begin{equation}
\label{Eq:A1}
\sup_{0 < x \leq R} |P_k(x) - P_j(x)| = \|P_k - P_j\|_\infty < \varepsilon 
\end{equation} 
for all $k, j > n$. In particular 
\begin{equation}
\label{Eq:88}
|P_k(x) - P_j(x)| < \varepsilon
\end{equation}
for all $k, j > n$ and all $x \in (0, R]$. Thus $(P_k(x))$ is a Cauchy sequence in the complete space $(\Real, |\cdotp|)$, which implies that the limit 
\begin{equation}
\label{Eq:89}
P(x) := \lim_{k \to \infty}{P_k(x)}  \in \mathbb{R}
\end{equation}
exists for all $x\in (0,R]$. It remains to show that the function $P: (0,R]\to \Real$ defined in this way is continuous and bounded and that $P_k\to P$ in $(X_R, \|\cdot\|_{\infty})$.

\begin{Lem}
The function $P: (0, R] \rightarrow \mathbb{R}$ defined by Eq.~(\ref{Eq:89}) is continuous and bounded and $P_k \to P$ with respect to the norm $\|\cdotp\|_\infty$. 
\end{Lem}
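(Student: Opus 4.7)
The plan is to first upgrade the pointwise convergence in Eq.~(\ref{Eq:89}) to uniform convergence, and then derive continuity and boundedness of $P$ as consequences. The key observation is that the Cauchy estimate~(\ref{Eq:A1}) is uniform in $x$, so one may take the pointwise limit $j\to \infty$ inside the inequality while preserving the bound.

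Concretely, given $\varepsilon > 0$, I would choose $n\in \mathbb{N}$ so that $|P_k(x) - P_j(x)| < \varepsilon$ holds for all $k, j > n$ and all $x\in (0, R]$, as in~(\ref{Eq:88}). Fixing $k > n$ and $x\in (0,R]$, and letting $j\to \infty$, the continuity of the absolute value together with~(\ref{Eq:89}) gives $|P_k(x) - P(x)| \leq \varepsilon$. Since this bound is independent of $x\in (0,R]$, taking the supremum yields $\|P_k - P\|_\infty \leq \varepsilon$ for all $k > n$. This proves that $P_k \to P$ uniformly on $(0,R]$, i.e.\ with respect to the norm $\|\cdot\|_\infty$, provided that $P$ indeed belongs to $X_R$.

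It remains to verify that $P: (0,R]\to \Real$ is bounded and continuous. Boundedness follows immediately: choosing any $k > n$, the triangle inequality gives $|P(x)| \leq |P_k(x)| + |P_k(x) - P(x)| \leq \|P_k\|_\infty + \varepsilon$ for all $x\in (0,R]$, and the right-hand side is finite since $P_k\in X_R$. For continuity, I would use the standard three-epsilon argument at an arbitrary point $x_0\in (0,R]$: given $\varepsilon > 0$, first pick $k$ so large that $\|P_k - P\|_\infty < \varepsilon/3$, and then use the continuity of the single function $P_k$ to choose $\delta > 0$ such that $|P_k(x) - P_k(x_0)| < \varepsilon/3$ whenever $|x - x_0| < \delta$. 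The decomposition
\begin{equation*}
|P(x) - P(x_0)| \leq |P(x) - P_k(x)| + |P_k(x) - P_k(x_0)| + |P_k(x_0) - P(x_0)|
\end{equation*}
then yields $|P(x) - P(x_0)| < \varepsilon$ for all $x\in (0,R]$ with $|x-x_0| < \delta$, proving continuity of $P$ at $x_0$.

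There is no real obstacle in this argument; the only subtlety to flag is that the pointwise limit~(\ref{Eq:89}) must be shown to yield a \emph{uniform} limit, and this is precisely what the Cauchy property buys us when $j$ is sent to infinity inside the uniform estimate~(\ref{Eq:88}). Once uniform convergence is in hand, continuity and boundedness of $P$ follow from standard facts about uniform limits of bounded continuous functions, and the lemma is established.
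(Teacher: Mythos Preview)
Your proof is correct and follows essentially the same approach as the paper: both pass to the limit $j\to\infty$ inside the uniform Cauchy estimate to obtain $\|P_k - P\|_\infty \leq \varepsilon$, and both establish continuity via the three-term decomposition $|P(x)-P(x_0)| \leq |P(x)-P_k(x)| + |P_k(x)-P_k(x_0)| + |P_k(x_0)-P(x_0)|$. The only cosmetic differences are that the paper proves continuity first (using the sequential characterization rather than $\varepsilon$--$\delta$) and deduces uniform convergence afterward, whereas you reverse this order; neither change affects the substance of the argument.
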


\begin{proof}
Recall that continuity of $P$ at a point $x\in (0,R]$ means that if we take any sequence $(x_m)$ in $(0, R]$ which converges to $x \in (0, R]$, then we must have $P(x_m) \to P(x)$. Thus we need to prove that for all $\varepsilon > 0$ there exists a natural number $n_0 \in \mathbb{N}$ such that  
\begin{equation}
|P(x_m) - P(x)| < \varepsilon,
\end{equation}
for all $m > n_0$. Let $\varepsilon > 0$. Since $(P_k)$ is a Cauchy sequence, there exists $n_1 \in \mathbb{N}$ such that
\begin{equation}
|P_k(x) - P_j(x)| < \frac{\varepsilon}{3},
\end{equation}
for all $k, j > n_1$ and all $x \in (0, R]$. Taking the limit $k \to \infty$ on both sides of the inequality and taking the supremum over $x$, one obtains
\begin{equation}
\label{Eq:Sup}
\sup_{0 < x \leq R} |P(x) - P_j(x)| \leq \frac{\varepsilon}{3},
\end{equation}
for all $j > n_1$. Fix $j = n_1 + 1$. Due to the fact that $P_j$ is continuous, there exists $n_2 \in \mathbb{N}$ such that for all $m > n_2$,
\begin{equation}
|P_j(x_m) - P_j(x)| < \frac{\varepsilon}{3}.
\end{equation} 
Therefore, we find for all $m > n_2$,
\begin{eqnarray}
|P(x_m) - P(x)| &=& |P(x_m) - P_j(x_m) + P_j(x_m) - P_j(x) + P_j(x) - P(x)|
\nonumber\\
&\leq& |P(x_m) - P_j(x_m)| + |P_j(x_m) - P_j(x)| + |P_j(x) - P(x)|
\nonumber\\
&<& \frac{\varepsilon}{3} + \frac{\varepsilon}{3} + \frac{\varepsilon}{3} = \varepsilon.	
\end{eqnarray}
Thus, we conclude that $P$ is a continuous function. The inequality~(\ref{Eq:Sup}) implies that $P - P_j$ is bounded for all $j > n_1$, and hence $P = P - P_j + P_j$ is also bounded, implying that $P\in X_R$. Moreover, the same inequality~(\ref{Eq:Sup}) implies that $\|P - P_j\|_\infty < \varepsilon$ for all $j > n_1$, which shows that $P_j\to P$ in $(X_R,  \|\cdot\|_\infty)$. This concludes the proof of the lemma.
\end{proof}

%%%%%%%%%%%%%%%%%%%%%%
%%%   REFERENCES   %%%
%%%%%%%%%%%%%%%%%%%%%%

%\bibliographystyle{apsrev}
\bibliographystyle{unsrt}
\bibliography{references}

\end{document}